\documentclass[a4paper,11pt]{article}

\usepackage{authblk}
\usepackage{a4wide}
\usepackage[utf8]{inputenc}
\usepackage{amsmath}
\usepackage{amsfonts}
\usepackage{amssymb}
\usepackage{mathrsfs}
\usepackage{amsthm}
\usepackage{color}
\usepackage{stmaryrd}
\usepackage[hidelinks]{hyperref}

\usepackage{latexsym,enumerate}
\usepackage{xcolor}
\usepackage{graphicx}
\usepackage{epstopdf}

\usepackage[T1]{fontenc}
\usepackage[utf8]{inputenc}
\usepackage[english]{babel}

\newcommand{\dx}[0]{\,\mathrm{d}}

\newcommand{\fbsde}[0]{}
\newcommand{\esssup}[0]{\mathrm{ess}\,\mathrm{sup}}




\newcommand*{\wh}{\widehat}
\newcommand*{\wc}{\check}



\newcommand*{\IR}{\mathbb{R}}

\newcommand*{\R}{\mathbb{R}}

\def \1{\mathbf{1}}

\newcommand{\be}{\begin{eqnarray*}}
\newcommand{\ee}{\end{eqnarray*}}
\newcommand{\ben}{\begin{eqnarray}}
\newcommand{\een}{\end{eqnarray}}
\newcommand{\bi}{\begin{itemize}}
\newcommand{\ei}{\end{itemize}}

\definecolor{color2}{gray}{0.7}

\newtheorem{thm}{Theorem}[section]

\newtheorem{lemma}[thm]{Lemma}
\newtheorem{propo}[thm]{Proposition}

\theoremstyle{definition}

\newtheorem{definition}[thm]{Definition}

\newtheorem{remark}[thm]{Remark}

\title{Evaluation of equity-based debt obligations}
\author[1]{Alexander Fromm\thanks{A. Fromm acknowledges support from the \emph{German Research Foundation} through the project AN 1024/4-1.}\thanks{alexander.fromm@uni-jena.de}}
\affil[1]{\small Institute for Mathematics, University of Jena, Ernst-Abbe-Platz 2, 07743 Jena, Germany}

\begin{document}

\maketitle

\begin{abstract}
We consider a class of participation rights, i.e.\ obligations issued by a company to investors who are interested in performance-based compensation. Albeit having desirable economic properties equity-based debt obligations (EbDO) pose challenges in accounting and contract pricing. We formulate and solve the associated mathematical problem in a discrete time, as well as a continuous time setting. In the latter case the problem is reduced to a forward-backward stochastic differential equation (FBSDE) and solved using the method of decoupling fields.
\end{abstract}

\vspace{0.5cm}
\noindent \textbf{2010 Mathematics Subject Classification.} 91G50, 91G80, 60H30.

\smallskip
\noindent \textbf{Keywords.} participation rights, mezzanine capital,  forward-backward stochastic differential equation, decoupling field. 

\section*{Introduction}

Equity-based debt obligations (EbDOs) are a form of participation rights: They constitute a legal arrangement between an investor and a company according to which the investor is promised a share of the company's profits. Contrary to common shares an EbDO does not constitute co-ownership of a company. It is legally a form of debt. At the same time the volume of the debt is not fixed but dependent on the future performance of the company.

EbDOs are unique among other types of participation rights in that they offer the most direct and reliable access to the company's future equity: An EbDO is defined as any obligation according to which the money owed is an increasing function of the company's equity at some moment in the future (maturity). Here the equity of a company is defined as the sum of all its assets minus the total volume of all its outstanding debt at a given moment in time.

We discuss the motivation behind EbDOs in Section \ref{economicimportance} and compare them to other forms of participation. A key advantage of EbDOs is that the company and the respective investor both win or loose at the same time. The possibility of one of the sides taking advantage of the other is, thus, greatly reduced.

However, there is an obstacle to using EbDOs in practice: While the pay-off of an EbDO is a function of the equity, the equity is also a function of the expected pay-offs of outstanding EbDOs, since in the calculation of the equity all debt must be considered. Thus, the equity of a company together with the value of outstanding EbDO debt is not defined explicitly, but only given implicitly via the pay-off functions of the outstanding EbDOs.

From a mathematical point of view we are faced with the following problem: Given the \emph{gross equity} of a company, i.e.\ the value of its assets minus the value of all non-performance-based liabilities, and given a list of pay-off functions for the outstanding EbDOs calculate explicitly how the gross equity is to be divided between the company and the investors, such that the share remaining with the company, which is the \emph{net equity}, yields the shares of the investors when applying the pay-off functions to it. A more fundamental problem is to show that there is a unique solution to this problem in the first place. \\
Being able to resolve this is essential to the applicability of EbDOs in practice: Firstly, companies are usually legally obliged to know and to report their equities. More importantly, a company must know its equity at the time an EbDO matures, as otherwise the volume of the pay-off cannot be determined. Finally, it is highly useful to know how a change in the gross equity impacts the actual equity: This would for instance allow the management of the company to determine to what price a new EbDO can be sold. The price must be such that the increase in the company's assets due to raised funds outweighs the expansion in outstanding EbDO debt such that the actual equity increases as a consequence of the transaction.

We solve the problem of evaluating EbDO debt and calculating the net equity in two different settings: under the discrete time model and the continuous time model. While the two models have their advantages and disadvantages both methods can be used in practice. The calculations in the discrete time case, however, are mathematically less challenging and more straightforward to implement. The continuous time model, however, is more flexible and allows extensions to more complex problems.

In the course of studying the continuous time model we reduce the EbDO debt evaluation problem to a forward-backward stochastic differential equation (FBSDE). Unfortunately, the resulting system is coupled, i.e.\ neither the forward nor the backward equation can be simulated independently of the other. Furthermore, such coupled systems are not necessarily well-posed. It is a longstanding challenge to find conditions guaranteeing that a given fully coupled FBSDE possesses a solution. Sufficient conditions are provided e.g.\ in \cite{Ma1994}, \cite{Pardoux1999}, \cite{ma:yon:99}, \cite{peng:wu:99}, \cite{Delarue2002}, \cite{ma:wu:zhang:15} (see also references therein). The method of decoupling fields, developed in \cite{Fromm2015} (see also the precursor articles \cite{ma:yin:zhan:12} 
and \cite{ma:wu:zhang:15}), is practically useful for determining whether a solution exists.
A decoupling field describes the functional dependence of the backward part $Y$ on the forward component $X$. If the coefficients of a fully coupled FBSDE satisfy a Lipschitz condition, then there exists a maximal non-vanishing interval possessing a solution triplet $(X,Y,Z)$ and a decoupling field with nice regularity properties. The method of decoupling fields consists in analyzing the dynamics of the decoupling field's gradient in order to determine whether the FBSDE has a solution on the whole time interval $[0, T]$.
The method can be successfully applied to various problems involving coupled FBSDE: In \cite{Proemel2015} solutions to a quadratic strongly coupled FBSDE with a two-dimensional forward equation are constructed to obtain solutions to the Skorokhod embedding problem for Gaussian processes with non-linear drift. In \cite{FrommImkeller2017} the problem of utility maximization in incomplete markets is treated for a general class of utility functions via construction of solutions to the associated coupled FBSDE. In the more recent work \cite{doi:10.1137/17M1152401}, the method is used to obtain solutions for the problem of optimal control of diffusion coefficients. In this paper we follow a similar methodology in showing that our FBSDE is in fact well-posed.

This paper is structured as follows: In Section \ref{economicimportance} we discuss EbDOs from a purely economic point of view and consider their key features. In Section \ref{probfordisc} we mathematically formulate the EbDO evaluation problem in discrete time. In Section \ref{solving discr} we solve the discrete time problem in the sense that its well-posedness is shown and a simple numerical scheme to calculate solutions is deduced. In Section \ref{probforcont} the continuous time problem is formulated in the form of a coupled FBSDE. Since we rely on the method of decoupling fields to study this system we briefly introduce this method and its underlying theory in Section \ref{sec:decfields}. Finally, in Section \ref{solving conti} existence and uniqueness of solutions to the FBSDE from Section \ref{probforcont} is shown. Moreover, the solution is obtained explicitly for a simple illustrating example. Note that we do not provide a numerical scheme for approximating solutions to the FBSDE introduced in Section \ref{probforcont}: numerical treatment of FBSDE is a separate topic and is usually considered in a more general context.

\section{EbDOs in comparison with other forms of participation}\label{economicimportance}

As touched upon in the introduction the motivation behind studying equity-based debt obligations is what one may refer to as the \emph{investor's participation problem}: An investor owns assets that could help a company to meet its production goals. Assume that the investor does not need or use these assets himself or herself at a given moment and could provide them to the company assuming there is sufficient return on his or her investment. Clearly, this return should come from claims to the company's assets based on a legal arrangement between the investor and the company. The problem consists of designing the contract such that both sides benefit or may expect to benefit on average.

The \emph{first precondition} for any successful arrangement is of course that the company has a sound business strategy such that assets inside the company may be expected to grow due to profits. Note that such profits are often rooted not only in the company's existing know-how and expertise, but also in the fact that through investors' participation the company is able to amass various resources needed to implement a non-trivial production scheme in the first place. In other words, the pooling of resources in itself, e.g.\ through the issuance of participation rights, may significantly contribute to the business's capability and profitability.

Mathematically, profits may be defined as changes of the company's equity in time. One might further distinguish between gross and net profits depending on what is meant by equity (e.g.\ gross equity or net equity). Apart from assumptions on profitability the \emph{second precondition} for a successful arrangement is the existence of a clear definition of what it means that the investor receives a share of the company's future profits. It is natural to agree to a payoff which is a function of the overall profit over the period of time beginning with the moment the arrangement is made and ending at a well-defined maturity. Since the current equity is known and only the future equity is a random variable, this is equivalent to assuming that the payoff is a function of the future equity. Unlike other participation rights an EbDO is a function of the \emph{net equity} rather than of any other notion of equity. Note here, that the net equity is the true equity of the company as this is what remains after \emph{all} existing obligations are considered and subtracted, including outstanding EbDOs themselves. We further assume that this payoff function is increasing: the larger the future net equity the more the investor will get at maturity. It is also natural to assume that there is no payoff in case of zero net equity. A typical EbDO payoff function is
$$ h(y):=\alpha \cdot (y-y_0)^+, $$
where $y_0\geq 0$, $\alpha>0$ are fixed constants and where $\cdot^+$ refers to the positive part of a real number. In other words, the payoff is equal to $\alpha (y-y_0)$ if the future equity $y$ is larger than $y_0$, the latter being typically set to the equity at the time the contract is entered. If $y\leq y_0$, on the other hand, there is no payoff at all. Now assume that for an EbDO with payoff function $h$ maturity is reached. Then the gross equity $x$, i.e.\ the sum of all assets minus all fixed, i.e.\ non-performance-based, liabilities is known or can be straightforwardly calculated. The value $y$ is still unknown and is implicitly given by the condition $y+h(y)=x$, assuming there is only one EbDO and no other performance based debt. If $y_0=0$ and $x>0$ then the unique solution is $y=x\frac{1}{1+\alpha}$. Accordingly, $h(y)=x\frac{\alpha}{1+\alpha}$. Note that $y$ is positive regardless of how large $\alpha>0$ was chosen. Thus, there is no limit to the volume of EbDOs a company can issue.

Apart from EbDOs there are other participation schemes used for essentially the same economic purpose. As examples let us name common shares, preferred stock and participation rights where the payoff is a function of the gross equity or changes of the gross equity over periods of time. As a specific example one might consider a payoff $h(x):=\alpha (x-x_0)^+$, where $x\geq 0$ refers to the gross equity of the company at a fixed future moment in time. In this case the net equity would be $y=x-h(x)$, assuming there is no other performance based debt. Observe that $\alpha>0$ should not exceed $1$ as otherwise bankruptcy occurs for sufficiently large $x$!

There are various reasons for EbDOs being a superior solution compared to the aforementioned alternatives, in the sense that EbDOs as a class of arrangements provide a better deal for both the investor and the company. We now briefly discuss the key advantages: \vspace{2mm} \\
{\bf Predictability and measurability:} As we shall see in the following sections the investor can calculate, based on his or her expectations about the company's (future) profitability, explicitly how much he or she will receive as EbDO holder on average. Other moments of this random variable can be calculated as well. Apart from that, the investor knows when the payoff occurs as the maturity is agreed on in the contract. This is a significant advantage over e.g.\ common shares: In the latter case the investor might not know when dividend payments will occur and whether there will be any dividend payments at all. Also, such payments are not necessarily tied to the company's performance: The company making profits does not immediately imply that they will be paid out in the form of dividends. This is likely to depend on decisions made by the management and/or other investors (shareholders) and is often merely insignificantly influenced by a given investor. This might be of major concern for minority shareholders in particular. 
In case the company is not paying any dividends and there are no reliable mechanisms to force it to do so, holding shares of such a company might be of interest for pure speculators only and not for actual investors.\vspace{2mm} \\
{\bf Incentivisation:} A key advantage of EbDOs is that they prevent conflicts of interest between the company and the respective investors. This is because the payoff is an increasing function of what remains in the company after the payoff (and all other payoffs and/or expected values of future payoffs). As a consequence there is no action a company can take to reduce the payoff without reducing its own net equity. Conversely, there is no action the investor can take to increase its payoff without also helping the company to grow. In other words, the company and the investor "sit in the same boat". By extension all EbDO holders "sit in the same boat".  The net equity of the company serves as the reference value everyone seeks to maximize.\\
The above property of EbDOs cannot be underestimated as it makes them highly suitable for compensations of executives and other persons exercising control over the company or otherwise influencing its economic performance. It pushes these individuals towards cooperation as everyone is interested in increasing the same value, which is the net equity.
\vspace{2mm} \\
{\bf Flexibility:} Unlike common shares an EbDO can be tailor made to suit the needs of individual investors by choosing the respective payoff function and maturity date appropriately to reflect the preferences of a particular investor.  \vspace{2mm} \\
{\bf Easy issuance:} A major problem with common shares is dilution: The more shares are issued the more the value of already issued shares is expected to decrease, especially if shares are sold at a relatively low price. This means that, on the one hand, the equity of the company increases due to the funds raised through public offerings, but on the other that existing investors do not necessarily benefit from this increase. There is no straightforward criterion to decide under which circumstances it is or is not appropriate for the management to issue new shares in light of its commitment to protect the interests of existing investors. \\
With EbDO's, however, this problem does not arise: Any action that increases the company's net equity is good both for the company and existing EbDO holders. In other words, the company's management is free to just concentrate on raising the net equity, e.g.\ through the issuance of EbDOs. It must merely make sure that no EbDO is sold below the so-called \emph{risk-neutral price} (see Section \ref{solving discr}).

Another problem along these lines occurs with participation rights, where the gross equity is used as the underlying: There is a natural limit on how many such participation rights can be issued as all of the gross equity is eventually claimed by and divided between existing investors and the remaining net equity becomes insignificant in comparison. At the same time the company might become unable to raise more funds through the issuance of additional participation rights as existing investors would still get their share of the gross equity and its increments, even if the contribution of newcomers is more significant. This may lead to an "investment deadlock", as the company is unable to offer a reasonable deal to new investors, i.e.\ a deal which would not amount to immediate exploitation of new investors by the existing. \\
This problem, however, does not arise with EbDO's as there is no natural limit to the overall volume of outstanding EbDOs. Whatever the existing EbDO - debt structure is it is impossible for the gross equity to be completely "consumed" such that the net equity reaches zero: Otherwise the EbDO - debt would be zero as well and, thus, the net equity actually equal to the gross equity. In other words, there is always room to accommodate new investors. Also, unlike in the case of participation rights based on the gross equity, the company cannot be driven into bankruptcy through EbDOs alone. On the contrary, EbDOs have an amortizing effect and help to protect the company from insolvency. \vspace{2mm} \\
{\bf Neutrality in terms of corporate governance:} Unlike common shares EbDOs do not go along with any controlling interest. In other words, they do not constitute co-ownership of the company nor do they entitle to any right to influence the management's decision making or the composition of the management. Although at first one might see it as a disadvantage, there is actually none, since the respective investor might still be allowed to exercise control over the company due to additional arrangements or because of an existing or scheduled role as a manager. \\ In general, it is neither necessary nor always desirable for an investor to exercise control and interfere in the company's operations. Whatever the internal corporate governance procedures are, they should be designed to ensure economically sound decision making. In particular, the company should be run by whoever is most competent in achieving stated business objectives. Obviously, this is not necessarily the person or the group of people who provide the largest amount of assets. It is worth noting that especially for significantly diversified investors it is neither desirable nor possible to bear responsibility for the management of every company in the portfolio. On the contrary, an investor might be reluctant to invest or even investigate whether the company should be invested in, if the company raises funds through shares or participation rights which lead to co-ownership: the company might end up under the control of future investors who's identities are still unknown such that this investor is being offered to buy a "cat in the sack". \\
To sum up, a company's internal constitution, composition and compliance standards are an important, but completely separate topic. Whatever the chosen corporate design is, it is desirable that investment vehicles used by the company to raise funds do not interfere with internal mechanisms of decision making or at least that this interference is not "hard-coded". This gives freedom to choose the most effective set of internal rules and practices. \vspace{2mm} \\
{\bf Invariance w.r.t.\ jurisdiction:} Unlike common shares the economic meaning of an EbDO is always the same regardless of the jurisdiction. The rights of a holder of common shares and the obligations of the company towards a shareholder, but also the relationship between minority and majority shareholders might vary greatly from jurisdiction to jurisdiction. In addition, there is the threat of new laws being enacted changing the nature of existing investments. A participation right in the form of a well-defined financial obligation with a fixed maturity and pay-off function greatly reduces such uncertainties. \vspace{2mm} \\
{\bf Tax efficiency:} Note that a company usually pays corporate taxes on positive changes of its equity but not necessarily on increments of the gross equity. If an increase of the gross equity occurs an increase of the value of outstanding EbDO-debt occurs at the same time. This reduces the company's net profits and impacts the corporate taxes it has to pay. Depending on the jurisdiction this might make EbDOs (and other participation rights) significantly more attractive in terms of corporate taxation compared to common shares. \vspace{2mm}

To sum up, EbDOs are, from a purely economic view, a highly desirable arrangement as they couple the profit of the investor to the overall economic success, over a given period of time, of the respective company in the most effective and reliable way possible. At the same time the nature of this arrangement creates a non-trivial evaluation and accounting problem: Since the payoff is a function of the net equity and the net equity depends on the payoff neither can be determined independently of the other. In other words, it is not a priori clear how a given gross equity is to be divided between the investors and the company such that an investor's payoff is a prescribed function of what remains in the company. Although for special cases, e.g.\ if there is only one investor and the EbDO matures immediately, the problem has a simple solution, some level of complexity is needed to treat the general case. In the following sections we formulate, study and solve this problem in a rigorous mathematical setting.

\section{Problem formulation in discrete time}\label{probfordisc}

For $n\in\mathbb{N}$ let $0\leq T_1< T_2<\ldots< T_n$ be deterministic future times. In addition, assume that for each $i\in\{1,\ldots,n\}$ there is a pay-off function $h_i:[0,\infty)\rightarrow[0,\infty)$, which determines the total pay-off of all EbDOs which mature at time $T_i$. This pay-off is a function of the net equity $Y_i$ of the company at time $T_i$. Each $h_i$ is monotonically increasing. Furthermore, we assume that $h_i(0)=0$ for all $i$. Otherwise, we could split up the pay-off into a constant part and a monotonically increasing part which starts at $0$. The constant part is then added to the fixed debt (i.e. debt which does not depend on the performance of the company) and is considered when calculating the gross equity $X_0\geq 0$ at time $0$. This value is deterministic and can be calculated by summing up the value of all assets of the company and subtracting all the non EbDO-debt from it. 

Our primary aim is to calculate the equity $Y_0$ of the company at time $0$. In case $T_1=0$ the company would pay out the amount $h_1(Y_0)$ to the holders of the EbDOs which mature at time $T_1$. More generally, the current equity of the company must be known at all times, not just for reporting purposes, but also in order to be able to tell how much must be payed out to the EbDO holders at a given moment. Our secondary aim is to calculate the expected pay-off $\mathbb{E}[h_i(Y_i)]\geq 0$ of the EbDOs which mature at a future time $T_i$ for arbitrary $i=1,\ldots,n$.

We denote by $X_i$ the gross equity at time $T_i$ and by $X'_i:=X_i-h_i(Y_i)$ the gross equity immediately after the payoffs occurring at time $T_i$, $i=1,\ldots,n$. We must have $X'_n=Y_n$ as the gross equity coincides with the net equity after all EbDO debt was served. 
Like the gross equity $X_i$ and the net equity $Y_i$ the random variable $X'_i$ is required to be non-negative.

The basis of our calculation is the current equity before EbDO-debt $X_0=X'_0\in[0,\infty)$, which unlike the values $X_i,Y_i,X'_i$, $i=1,\ldots,n$, is a priori known, and which we refer to as the gross equity at time $0=:T_0$. In addition, we must postulate the dynamics according to which the gross equity evolves between payoffs. The most simple model is the model of a geometric Brownian motion: We assume that $X_{i}$ is equal to $X'_{i-1}\cdot Z_i$, where 
$$ Z_i \sim\mathcal{LN}\left(\left(\mu-\frac{1}{2}\sigma^2\right)\cdot \left(T_{i}-T_{i-1}\right),\,\sigma^2\cdot \left(T_{i}-T_{i-1}\right)\right) $$
has a log-normal distribution with the parameters specified above. $\mu\in\mathbb{R}$ determines the trend of the gross equity and $\sigma\in[0,\infty)$ its volatility in time. Thus, the gross equity does not have a drift (neither to the upside nor to the downside) if and only if $\mu=0$. $Z_i$ is deterministic if and only if $\sigma \left(T_{i}-T_{i-1}\right)=0$, otherwise it is stochastic. $\sigma$ reflects the uncertainty about the future evolution of the gross equity due to the company's intrinsic performance. We assume that the $Z_i$, $i=1,\ldots,n$, are independent random variables. 

For accounting purposes it is necessary to assume that $\mu=0$ since expectations about future profits cannot be included in the calculation of the \emph{current} equity. An investor could set $\mu$ to a positive value to calculate the expected pay-off of his or her EbDO under the assumption that the company grows with the rate $\mu$ per unit of time on average. However, this would be a subjective view, different from a neutral stance to be taken for accounting purposes. In the same context, we must require $Y_i$, $i\in\{0,\ldots,n\}$, to be a \emph{martingale}. In other words the current net equity must be equal to the expectation of a future net equity given the information already available. The martingale property must hold w.r.t.\ the filtration given by $\mathcal{F}_i:=\sigma\left(Z_j,\, j=1,\ldots,i\right)$, $i\in\{0,\ldots,n\}$, where $\mathcal{F}_0$ is trivial.

Now, given $\sigma$, $X_0$ and the pay-off functions $h_i$ the problem of calculating $Y_0$ as well as $\mathbb{E}[h_i(Y_i)]$ under the assumption that $\mu=0$ can be solved which is done in the next section. The solution entails that suitable adapted processes $X_i,Y_i,X'_i$, $i=0,\ldots,n$, such that all of the above is satisfied, exist and are unique in the first place. We refer to such $X_i,Y_i,X'_i$, $i=0,\ldots,n$, as the solution to the \emph{EbDO evaluation problem}. To sum up, such a triplet must satisfy:
\begin{enumerate}
\item $(X_i)$, $(X'_i)$, $(Y_i)$ are non-negative and adapted w.r.t.\ $(\mathcal{F}_i)_{i\in\{0,\ldots,n\}}$,
\item $X_{i} =  X'_{i-1}\cdot Z_i$ a.s.\ for all $i\in\{1,\ldots,n\}$,
\item $X'_{i} = X_i-h_i(Y_i)$ a.s.\ for all $i\in\{1,\ldots,n\}$,
\item $(Y_i)_{i\in\{0,\ldots,n\}}$ is a martingale w.r.t.\ $(\mathcal{F}_i)_{i\in\{0,\ldots,n\}}$,
\item $Y_n=X'_n$ a.s.\ and $X'_0=X_0$, where $X_0\in[0,\infty)$ is given.
\end{enumerate}

\section{Solving the evaluation problem in discrete time}\label{solving discr}

We denote by $\mathrm{Id}$ the identity mapping on $[0,\infty)$. Let us define a function $f_{n-1}:[0,\infty)\rightarrow[0,\infty)$ via
$$f_{n-1}(x'):=\mathbb{E}[(\mathrm{Id}+h_{n})^{-1}(x' Z_n)],\qquad x'\geq 0.$$
Note that the function $\mathrm{Id}+h_{n}$ is equal zero at zero, is strictly increasing and, therefore, is invertible. Its inverse is also strictly increasing and is zero at zero. As a consequence $f_{n-1}:[0,\infty)\rightarrow[0,\infty)$ is strictly increasing such that $f_{n-1}(0)=0$.
Next, we define 
$$f_{n-2}(x'):=\mathbb{E}[(f^{-1}_{n-1}+h_{n-1})^{-1}(x' Z_{n-1})],$$
for arbitrary $x'\geq 0$. Again, $(f^{-1}_{n-1}+h_{n-1})^{-1}$ is well-defined, strictly increasing and vanishes at zero. Therefore, $f_{n-2}$ has the same properties. Similarly, we define recursively
$$f_{i-1}(x'):=\mathbb{E}[(f^{-1}_{i}+h_{i})^{-1}(x' Z_{i})],\qquad x'\geq 0, $$
for every $i=1,\ldots,n-2$. This holds true for $i\in\{n-1,n\}$ as well, after setting $f_n:=\mathrm{Id}$.
At the end of this backwards recursion we obtain $f_{0}:[0,\infty)\rightarrow[0,\infty)$. All $f_{i}:[0,\infty)\rightarrow[0,\infty)$ are strictly increasing and vanish at $0$.

We now claim that $Y_0$ can be calculated simply as $f_0(X_0)$:

\begin{thm}\label{mainresultdiscrete}
There exists a unique solution $X_i,Y_i,X'_i$, $i=0,\ldots,n$ to the EbDO evaluation problem. Furthermore, these processes can be obtained explicitly using the following forward recursion: For $X_0\in[0,\infty)$, set $Y_0=f_0(X_0)$ and $X'_0=X_0$. Then, for $i\in\{1,\ldots,n\}$, set
\begin{align}\label{forwardrecursion}
 X_{i} &:=  X'_{i-1}\cdot Z_i, \nonumber\\
 Y_i &:=  (f^{-1}_{i}+h_{i})^{-1}(X_i), \\
 X'_{i} &:= X_i-h_i(Y_i). \nonumber
\end{align}
\end{thm}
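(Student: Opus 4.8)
The plan is to prove existence and uniqueness separately, in both cases carrying the identity $Y_i = f_i(X_i')$ (equivalently $X_i' = f_i^{-1}(Y_i)$) through the recursion as the central invariant.

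\emph{Existence.} First I would define $X_i,Y_i,X_i'$ by the forward recursion \eqref{forwardrecursion} and check items 1--5. Adaptedness and non-negativity (item 1) are immediate, since all the maps involved are monotone hence Borel: $X_i=X'_{i-1}Z_i\ge 0$ is $\mathcal F_i$-measurable because $X'_{i-1}$ is $\mathcal F_{i-1}$-measurable and $Z_i$ is $\mathcal F_i$-measurable; $Y_i=(f_i^{-1}+h_i)^{-1}(X_i)\ge 0$ because $(f_i^{-1}+h_i)^{-1}$ maps $[0,\infty)$ into $[0,\infty)$; and from $X_i=(f_i^{-1}+h_i)(Y_i)=f_i^{-1}(Y_i)+h_i(Y_i)$ one gets $X_i'=X_i-h_i(Y_i)=f_i^{-1}(Y_i)\ge 0$, which simultaneously establishes items 2 and 3 and the crucial identity $X_i'=f_i^{-1}(Y_i)$, i.e.\ $Y_i=f_i(X_i')$. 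Item 5 then follows: $X_0'=X_0$ by definition, and since $f_n=\mathrm{Id}$ we have $X_n'=f_n^{-1}(Y_n)=Y_n$. It remains to verify the martingale property (item 4). For integrability I would first show by backward induction that each $f_i$ is dominated by a linear function: $f_n=\mathrm{Id}$, and since $f_i^{-1}+h_i\ge f_i^{-1}$ gives $(f_i^{-1}+h_i)^{-1}\le f_i$, one obtains $f_{i-1}(x')=\mathbb E[(f_i^{-1}+h_i)^{-1}(x'Z_i)]\le\mathbb E[f_i(x'Z_i)]\le c_i\,\mathbb E[Z_i]\,x'$; as the $Z_i$ are log-normal and have all moments, every $Y_i=(f_i^{-1}+h_i)^{-1}(X_i)\le f_i(X_i)$ is integrable. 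Then, using that $X'_{i-1}$ is $\mathcal F_{i-1}$-measurable while $Z_i$ is independent of $\mathcal F_{i-1}$, a ``freezing'' argument gives $\mathbb E[Y_i\mid\mathcal F_{i-1}]=\mathbb E[(f_i^{-1}+h_i)^{-1}(X'_{i-1}Z_i)\mid\mathcal F_{i-1}]=f_{i-1}(X'_{i-1})=Y_{i-1}$, the last step being the identity $Y_{i-1}=f_{i-1}(X'_{i-1})$ (which holds at $i-1=0$ by the definition $Y_0=f_0(X_0)$).

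\emph{Uniqueness.} Conversely, let $(X_i,Y_i,X_i')_{i=0}^n$ be any solution in the sense of items 1--5. I would prove $Y_i=f_i(X_i')$ for every $i$ by backward induction. For $i=n$ this is $Y_n=X_n'$ (item 5) together with $f_n=\mathrm{Id}$. Assuming $X_i'=f_i^{-1}(Y_i)$, items 2 and 3 give $X_i=X_i'+h_i(Y_i)=f_i^{-1}(Y_i)+h_i(Y_i)=(f_i^{-1}+h_i)(Y_i)$, hence $Y_i=(f_i^{-1}+h_i)^{-1}(X'_{i-1}Z_i)$; taking the conditional expectation given $\mathcal F_{i-1}$, applying the same freezing step and the martingale property (item 4) yields $Y_{i-1}=f_{i-1}(X'_{i-1})$, completing the induction. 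In particular $Y_0=f_0(X_0)$ is forced, and then relations 2, 3 and $Y_i=(f_i^{-1}+h_i)^{-1}(X_i)$ determine $X_i,Y_i,X_i'$ step by step, so the solution coincides with the one constructed above.

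\emph{Main obstacle.} The computations are elementary; the two points needing care are the conditional-expectation (``freezing'') step, which relies on the independence of $Z_i$ from $\mathcal F_{i-1}$ and on sufficient integrability, and the accompanying check that the $f_i$ and hence the $Y_i$ are integrable, for which the domination $(f_i^{-1}+h_i)^{-1}\le f_i\le c_i\cdot\mathrm{Id}$ together with the moment bounds for log-normal $Z_i$ is the essential input. Conceptually, the single idea that makes everything work is to recognise $Y_i=f_i(X_i')$ as the right invariant to propagate: uniqueness then reduces to a backward induction and existence to verifying that the forward recursion reproduces this invariant.
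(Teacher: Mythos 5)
Your proposal is correct and follows essentially the same route as the paper: existence by verifying that the forward recursion satisfies properties 1--5 (with the key identity $X_i'=f_i^{-1}(Y_i)$ giving non-negativity and the terminal condition), and uniqueness by backward induction propagating $Y_i=f_i(X_i')$ via the martingale property and the independence of $Z_i$ from $\mathcal F_{i-1}$. Your explicit integrability check, via $(f_i^{-1}+h_i)^{-1}\le f_i\le c_i\cdot\mathrm{Id}$ and the moments of the log-normal factors, is a small but welcome addition that the paper leaves implicit.
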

\begin{proof} Firstly, note that a random variable $X'_{i}$ is non-negative if the last two equations of \eqref{forwardrecursion} are satisfied with $X_{i},Y_{i}\geq 0$:  We have $(f^{-1}_{i}+h_{i})(Y_i)=X_i$, so 
$$X'_{i}=X_i-h_i(Y_i)=f^{-1}_{i}(Y_i)\geq 0, \qquad i=1,\ldots,n.$$
This shows in particular that recursion \eqref{forwardrecursion} is well defined and the resulting processes are non-negative. Now let us verify that such processes are in fact a solution to the problem: Clearly, $X_i,Y_i,X'_i$ are adapted (inductive argument). 
Note that $X'_{n}=f^{-1}_{n}(Y_n)=Y_n$. Since the properties $ X_{i} = X'_{i-1}\cdot Z_i$ and $ X'_{i} =X_i-h_i(Y_i)$ are given we must merely show that $Y_i$, $i\in\{0,\ldots,n\}$, is a martingale, i.e.\ $Y_{i-1}=\mathbb{E}[Y_i|\mathcal{F}_{i-1}]$: Using the definition of $f_{i-1}$ we have in fact
$$ \mathbb{E}[Y_i|\mathcal{F}_{i-1}] = \mathbb{E}[(f^{-1}_{i}+h_{i})^{-1}(X'_{i-1}\cdot Z_i)|\mathcal{F}_{i-1}] = f_{i-1}(X'_{i-1}), $$
since $\sigma(Z_i)$ and $\mathcal{F}_{i-1}$ are independent and $X'_{i-1}$ is measurable w.r.t.\ $\mathcal{F}_{i-1}$. Now if $i=1$ we have $f_{i-1}(X'_{i-1})=f_0(X'_0)=Y_0$. 
Otherwise, $X'_{i-1}=f^{-1}_{i-1}(Y_{i-1})$ yields $f_{i-1}(X'_{i-1})=Y_{i-1}$, which verifies the martingale property.

On the other hand, we can show that for any solution to the EbDO evaluation problem, i.e.\ for any three non-negative and adapted processes $X_i,Y_i,X'_i$, $i=0,\ldots,n$, such that the properties  $X_{i} =  X'_{i-1}\cdot Z_i$, $X'_{i} = X_i-h_i(Y_i)$, $Y_n=X'_n$, $X_0=X'_0$ and the martingale property for $(Y_i)$ are satisfied, recursion \eqref{forwardrecursion} must already hold:

Without even using the martingale property we first obtain $$Y_n =  (f^{-1}_{n}+h_{n})^{-1}(X_n)=(\mathrm{Id}+h_{n})^{-1}(X_n)$$ from $Y_n=X'_{n} = X_n-h_n(Y_n)$. Next consider the property
$\mathbb{E}[Y_n|\mathcal{F}_{n-1}]=Y_{n-1}$: Using $Y_n=(\mathrm{Id}+h_{n})^{-1}(X_n)$ and  $X_{n} =  X'_{n-1}\cdot Z_n$ we obtain
$$ Y_{n-1} = \mathbb{E}[(\mathrm{Id}+h_{n})^{-1}(X'_{n-1} Z_n)|\mathcal{F}_{n-1}]=f_{n-1}(X'_{n-1}). $$
This implies $ f^{-1}_{n-1}(Y_{n-1})=X'_{n-1}=X_{n-1}-h_{n-1}(Y_{n-1})$, which then yields \eqref{forwardrecursion} for $i=n-1$ using a straightforward transformation. Similarly, consider the martingale property $\mathbb{E}[Y_{j+1}|\mathcal{F}_{j}]=Y_{j}$ for a $j\in\{1,\ldots,n-2\}$ while assuming that \eqref{forwardrecursion} is already verified for all $i\in\{j+1,\ldots,n\}$. Using $X_{j+1} =  X'_{j}\cdot Z_{j+1}$ we have
$$ Y_{j} = \mathbb{E}[(f^{-1}_{j+1}+h_{j+1})^{-1}(X'_{j} Z_{j+1})|\mathcal{F}_{j}]=f_{j}(X'_{j}). $$
This implies $ f^{-1}_{j}(Y_{j})=X'_{j}=X_{j}-h_{j}(Y_{j})$, which then yields \eqref{forwardrecursion} for $i=j$ using a straightforward transformation. This completes an inductive argument showing that \eqref{forwardrecursion} holds for any $i\in\{1,\ldots,n\}$. Finally, considering $\mathbb{E}[Y_{1}|\mathcal{F}_{0}]=Y_{0}$ we obtain
$$ Y_{0} = \mathbb{E}[(f^{-1}_{1}+h_{1})^{-1}(X'_{0} Z_{1})|\mathcal{F}_{0}]=f_{0}(X'_{0})=f_{0}(X_{0}). $$
\end{proof}

\begin{remark}
It is worth noting that once the functions $f_i$ have been calculated the recursion \eqref{forwardrecursion} can be used to simulate the gross and net equities forward into the future. Using Monte Carlo simulation it is straightforward to obtain estimates for $\mathbb{E}[h_i(Y_i)]$. This expected payoff represents the total estimated value of EbDOs which mature at time $T_i$. By performing the same simulation, but with a trend $\mu$ different from $0$ an investor can calculate or estimate the expected payoff $\mathbb{E}[h_i(Y_i)]$ under the assumption that the company performs with some rate $\mu>0$ due to profits expected on average. This allows to assign to each EbDO two prices: The \emph{risk neutral price}, i.e.\ the expected payoff under the assumption that the assets under the control of the company neither grow nor shrink, and the \emph{market price}, which is the price an investor is willing to pay based on his or her expectations regarding the company's future performance. The risk neutral price is the one to be reported in the company's official balance sheet, while the market price is a subjective value used, for instance, by a profit oriented analyst. These two values can considerably differ, as an EbDO might be essentially worthless if no growth occurs, but might become very valuable if the company grows at a steady pace over a prolonged period of time until the EbDO matures.

This difference between the risk neutral price and the subjective market price is the immediate reason why a deal between the company and the investor takes place and the EbDO is sold for a given price: This sum is, from the point of view of the company, more valuable then the EbDO, since the company must assume a risk neutral view in its books. The investor on the other hand may consider the EbDO more valuable than the fixed amount of money paid to purchase it, because the investor is working with a positive $\mu$ and thereby with a higher price.

Conversely, it is possible that investors predominantly work with a negative $\mu$, due to poor expectations regarding the company's future performance. In this case both the company and the investor might be interested in the opposite transaction, i.e.\ in a buy-back of outstanding EbDOs. This might allow the company to improve its balance sheet by removing outstanding obligations from it, while the investor protects himself or herself from future losses by means of selling currently held positions.
\end{remark}

\begin{remark}
Theorem \ref{mainresultdiscrete} confirms that the key to solving the evaluation problem is obtaining the functions $f_{i}:[0,\infty)\rightarrow[0,\infty)$ and $(f^{-1}_{i+1}+h_{i+1})^{-1}:[0,\infty)\rightarrow[0,\infty)$, $i=0,\ldots,n-1$. Note that the former is obtained from the latter by calculating an expectation. More precisely, $f_{i}(x')=\mathbb{E}[(f^{-1}_{i+1}+h_{i+1})^{-1}(x' Z_{i+1})]$ for any $x'\geq 0$, where $Z_{i+1}$ has a log-normal distribution. This motivates a simple numerical scheme to calculate or approximate the functions $f_{i}$:

Assume that we have two piecewise linear approximations of $f_{i+1}$ and $h_{i+1}$, such that these approximations are increasing and equal zero at zero. We also assume that the piecewise linear approximation of $f_{i+1}$ is strictly increasing. Then the corresponding approximation of $(f^{-1}_{i+1}+h_{i+1})^{-1}$ is also piecewise linear, as inverse functions of piecewise linear functions are themselves piecewise linear. It is also strictly increasing and equal zero at zero. By a slight abuse of notation, let us now imagine that $(f^{-1}_{i+1}+h_{i+1})^{-1}$ is piecewise linear. This means that it is a linear combination of functions which are either equal $1$ or equal the identity on some interval and $0$ everywhere else. So, for a given $x'>0$ the value $\mathbb{E}[(f^{-1}_{i+1}+h_{i+1})^{-1}(x' Z_{i+1})]$ can be calculated as a linear combination of integrals of the form
$$ \int_{a}^b \rho(v)\dx v \qquad\textrm{and/or}\qquad \int_{a}^b e^{v} \rho(v)\dx v, $$
where $-\infty<a<b<\infty$ are constants that are calculated explicitly and where $\rho$ is the density of a normal distribution. This statement is true because $(f^{-1}_{i+1}+h_{i+1})^{-1}$ is piecewise linear and $Z_{i+1}$ is equal to the exponential function applied to a normally distributed random variable. Observe further that expressions of the form $\int_{a}^b \rho(v)\dx v$ and $\int_{a}^b e^{v} \rho(v)\dx v$ can be calculated explicitly using the cumulative function of the standard normal distribution.

Since according to the above the function $f_{i}(x')$ can be calculated explicitly for any given $x'$ it is straightforward to obtain a strictly increasing piecewise linear approximation of $f_{i}$ and then repeat the whole process to obtain an approximation of $f_{i-1}$ and so on until a piecewise linear approximation of $f_0$ is obtained.
\end{remark}

While the above numerical scheme is straightforward to implement, the calculations might be somewhat slow if there is a large number of maturities $T_i$: A new function $f_{i}:[0,\infty)\rightarrow[0,\infty)$ needs to be calculated for each $i$. At the same time, if there is such a high density of maturities the payoff process $j\mapsto \sum_{i=1}^j h_i(Y_i)$ might increasingly resemble a continuous process with a \emph{payoff rate} which is a function of the respective net equity. This motivates what we refer to as the \emph{continuous time model} introduced and studied in the following sections.

\section{Problem formulation in continuous time}\label{probforcont}

Under the continuous time model we assume that the EbDOs do not mature at finitely many points $0\leq T_1< T_2<\ldots< T_n$ in time, but instead postulate that the pay-off due to EbDOs takes place continuously in time according to a rate function $h:[0,T]\times[0,\infty)\rightarrow[0,\infty)$ over a given time period $[0,T]$. In other words the \emph{total} pay-off over the time $[0,t]$, where $t\in[0,T]$, is given by
$$ \int_0^t h(s,Y_s)\dx s, $$
where $Y_s$ is the (a priori unknown) net equity of the company at time $s$. Here $T>0$ is larger than the maturity of every EbDO, such that all outstanding EbDOs mature during the time interval $[0,T]$.
Note that the value $Y_s$ is to be determined or modeled stochastically. As a matter of fact, our primary goal is to calculate $Y_0$. We assume that $h(t,\cdot)$ is monotonically increasing and satisfies $h(t,0)=0$, where $t\in[0,T]$ is arbitrary. 

\begin{remark}
Observe that if the company has an EbDO with a fixed maturity $T_i$ and payoff function $h_i$ on its books then the associated payoff $h_i(Y_{T_i})$ can be approximated by the value 
$\frac{1}{\varepsilon}\int_{T_i-\varepsilon}^{T_i} h_i(Y_{s})\dx s$ with some small $\varepsilon>0$. This approximation fits mathematically the continuous time setting with the payoff rate function being $(s,y)\mapsto\frac{1}{\varepsilon}h_i(y)\mathbf{1}_{[T_i-\varepsilon,T_i]}(s)$.
\end{remark}

In addition to the payoff rate $h$ we have the gross equity $X_0\geq 0$ at time $0$ as the starting point of our calculation. Finally, we need to postulate the dynamics of the gross equity $X_s$ in time. Clearly, the gross equity is continuously diminished due to the pay-off rate $h(s,Y_s)$. Apart from that we assume that there are random fluctuations characteristic for a geometric Brownian motion without a drift. Thus, we obtain that $X$ has the dynamics
$$  X_s=X_0-\int_{0}^s h(r,Y_r)\dx r+\int_{0}^s X_r\cdot\sigma\dx W_r, \qquad s\in[0,T],$$
where $W$ is a Brownian motion and where $\sigma\in[0,\infty)$ is a fixed parameter determining the uncertainty about the future evolution of the gross equity due to the company's intrinsic performance. Note that we work under the assumption that on average the company neither shrinks nor grows in time as expectations about future profits cannot be included in the calculations and known future expenditures were already incorporated in the calculation of $X_0$. This means that the \emph{total wealth} 
$$t\longmapsto X_t+\int_0^t h(s,Y_s)\dx s$$ is a martingale. For accounting purposes we must also postulate that the net equity $Y_t$, $t\in[0,T]$, is a martingale as well. Note that $X_T=Y_T$ holds since we assume that there are no more outstanding EbDOs beyond time $T$. Now, the martingale representation theorem yields
$$ Y_s=X_T-\int_{s}^{T}Z_r\dx W_r, \quad s\in[0,T], $$
with some square-integrable process $Z$. To sum up, we have to solve the following coupled forward-backward system:
\ben\label{fbsde}
\begin{array}{rcl}
X_s&=&X_0-\int_{0}^s h(r,Y_r)\dx r+\int_{0}^s \sigma X_r\dx W_r, \\
 Y_s&=&X_T-\int_{s}^{T}Z_r\dx W_r,\qquad \textrm{a.s. for all }s\in[0,T].
\end{array}
\een
In order to show existence and uniqueness of solutions $X,Y$ we use the so-called the method of decoupling fields, which was designed for the purpose of analyzing coupled systems. We briefly introduce the theory of decoupling fields in the following section. A cornerstone of this method is the construction of a time-dependent random field $u$ which connects $X$ and $Y$ via $u(s,X_s)=Y_s$.
\begin{remark}
Before introducing the theoretical backbone of our analysis and then actually solving the above problem in Section \ref{solving conti} let us point out that strongly coupled FBSDEs in general are a powerful and flexible tool allowing to formulate and study more general and more complicated problems than the one given by \eqref{fbsde}. For instance, one might be interested in a problem where the pay-off rate $h$ also depends on $\omega\in\Omega$ or on the gross equity $X$. It is also possible to study multi-dimensional problems where net equities of different companies are to be determined simultaneously, for instance due to two or more companies holding EbDOs of each other. This may occur if there is a set of affiliated companies forming a group. We reserve such considerations and generalizations to future research and concentrate on the basic time-continuous problem provided by \eqref{fbsde}.
\end{remark}

\section{The method of decoupling fields}\label{sec:decfields}

As a key result of this paper we prove in Section \ref{solving conti} the solvability of (\ref{fbsde}). 
Even under Lipschitz assumptions for $h$, it is not trivial to show well-posedness of (\ref{fbsde}) due to
its coupled nature. By this we mean that the forward equation, which describes the dynamics of $X$, depends on $Y$ via $h$, while the backward equation, describing the dynamics of $Y$, depends on $X$ via the condition $Y_T=X_T$. This means that neither of the two processes can be simulated or calculated independently of the other. Furthermore, coupled systems are not always solvable, even under Lipschitz conditions. It is, thus, necessary to take more subtle structural properties into account to conduct the proof. 
Our argumentation will be based on the method of decoupling fields which we briefly sum up in this section.
\vspace{1mm}

For a fixed finite time horizon $T>0$, we consider a complete filtered probability space $(\Omega,\mathcal{F},(\mathcal{F}_t)_{t\in[0,T]},\mathbb{P})$, where 
$\mathcal{F}_0$ consists of all null sets, $(W_t)_{t\in[0,T]}$ is a $1$-dimensional Brownian motion and $\mathcal{F}_t:=\sigma(\mathcal{F}_0,(W_s)_{s\in [0,t]})$ with $\mathcal{F}:=\mathcal{F}_T$. The dynamics of an FBSDE is given by
\begin{align*}
  X_s&=X_{0}+\int_{0}^s\mu(r,X_r,Y_r,Z_r)d r+\int_{0}^s\sigma(r,X_r,Y_r,Z_r)d W_r,\\
  Y_t&=\xi(X_T)-\int_{t}^{T}f(r,X_r,Y_r,Z_r)d r-\int_{t}^{T}Z_r dW_r,
\end{align*}
for $s,t \in [0,T]$ and $X_0 \in \mathbb{R}$, where $(\xi,(\mu,\sigma,f))$ are measurable functions such that 
\begin{align*}
  \xi &\colon\Omega\times\mathbb{R} \to \mathbb{R}, &
  \mu &\colon [0,T]\times\Omega\times\mathbb{R}\times\mathbb{R}\times\mathbb{R}\to \mathbb{R},\\
  \sigma&\colon [0,T]\times\Omega\times\mathbb{R}\times\mathbb{R}\times\mathbb{R}\to \mathbb{R},&
  f&\colon [0,T]\times\Omega\times\mathbb{R}\times\mathbb{R}\times\mathbb{R}\to \mathbb{R},
\end{align*}
Throughout the whole section $\mu$, $\sigma$ and $f$ are assumed to be progressively measurable with respect to $(\mathcal{F}_t)_{t\in[0,T]}$.

A decoupling field comes with an even richer structure than just a classical solution $(X,Y,Z)$.

\begin{definition}\label{def:decoupling field}
  Let $t\in[0,T]$. A function $u\colon [t,T]\times\Omega\times\mathbb{R}\to\mathbb{R}$ with $u(T,\cdot)=\xi$ a.e. is called \emph{decoupling field} for $\fbsde (\xi,(\mu,\sigma,f))$ on $[t,T]$ if for all $t_1,t_2\in[t,T]$ with $t_1\leq t_2$ and any $\mathcal{F}_{t_1}$-measurable $X_{t_1}\colon\Omega\to\mathbb{R}$ there exist progressively measurable processes $(X,Y,Z)$ on $[t_1,t_2]$ such that
  \begin{align}\label{eq:decoupling}
    X_s&=X_{t_1}+\int_{t_1}^s\mu(r,X_r,Y_r,Z_r) d r+\int_{t_1}^s\sigma(r,X_r,Y_r,Z_r)d  W_r,&\nonumber\\
    Y_s&=Y_{t_2}-\int_{s}^{t_2}f(r,X_r,Y_r,Z_r) d r-\int_{s}^{t_2}Z_r d W_r,& \nonumber \\
    Y_s&=u(s,X_s),
  \end{align}
  a.s. for all $s\in[t_1,t_2]$. In particular, we want all integrals to be well-defined.
\end{definition}

Some remarks about this definition are in place.
\begin{itemize}
  \item The first equation in \eqref{eq:decoupling} is called the \emph{forward equation}, the second the \emph{backward equation} and the third will be referred to as the \emph{decoupling condition}.
  \item Note that, if $t_2=T$, we get $Y_T=\xi(X_T)$ a.s.\ as a consequence of the decoupling condition together with $u(T,\cdot)=\xi$. 
  \item If $t_2=T$ we can say that a triplet $(X,Y,Z)$ solves the FBSDE, meaning that it satisfies the forward and the backward equation, together with $Y_T=\xi(X_T)$. This relationship $Y_T=\xi(X_T)$ is referred to as the \emph{terminal condition}. 
\end{itemize}

For the following we need to introduce further notation.

Let $I\subseteq [0,T]$ be an interval and $u: I\times\Omega\times\mathbb{R}\rightarrow \mathbb{R}$ a map such that $u(s,\cdot)$ is measurable for every $s\in I$. We define
\begin{equation*}
  L_{u,x}:=\sup_{s\in I}\inf\{L\geq 0\,|\,\textrm{for a.a. }\omega\in\Omega: |u(s,\omega,x)-u(s,\omega,x')|\leq L|x-x'|\textrm{ for all }x,x'\in\mathbb{R}\},
\end{equation*}
where $\inf \emptyset:=\infty$. We also set $ L_{u,x}:=\infty$ if $u(s,\cdot)$ is not measurable for every $s\in I$. One can show that $L_{u,x}<\infty$ is equivalent to $u$ having a modification which is truly Lipschitz continuous in $x\in\mathbb{R}$.

We denote by $L_{\sigma,z}$ the Lipschitz constant of $\sigma$ w.r.t.\ the dependence on the last component $z$. We set $L_{\sigma,z}=\infty$ if $\sigma$ is not Lipschitz continuous in $z$. 

By $L_{\sigma,z}^{-1}=\frac{1}{L_{\sigma,z}}$ we mean $\frac{1}{L_{\sigma,z}}$ if $L_{\sigma,z}>0$ and $\infty$ otherwise.

For an integrable real valued random variable $F$ the expression $\mathbb{E}_t[F]$ refers to $\mathbb{E}[F|\mathcal{F}_t]$, while $\mathbb{E}_{t,\infty}[F]$ refers to $\esssup\,\mathbb{E}[F|\mathcal{F}_t]$, which might be $\infty$, but is always well defined as the infimum of all constants $c\in[-\infty,\infty]$ such that $\mathbb{E}[F|\mathcal{F}_t]\leq c$ a.s. Additionally, we write $\|F\|_\infty$ for the essential supremum of $|F|$.

In practice it is important to have explicit knowledge about the regularity of $(X,Y,Z)$. For instance, it is important to know in which spaces the processes live, and how they react to changes in the initial value. 

\begin{definition}\label{def:regularity decoupling}
  Let $u\colon [t,T]\times\Omega\times\mathbb{R}\to\mathbb{R}$ be a decoupling field to $\fbsde(\xi,(\mu,\sigma,f))$.
  \begin{enumerate}
   \item We say $u$ to be \emph{weakly regular} if $L_{u,x}<L_{\sigma,z}^{-1}$ and $\sup_{s\in[t,T]}\|u(s,\cdot,0)\|_{\infty}<\infty$.
   \item A weakly regular decoupling field $u$ is called \emph{strongly regular} if for all fixed $t_1,t_2\in[t,T]$, $t_1\leq t_2,$ the processes $(X,Y,Z)$ arising in \eqref{eq:decoupling}  are a.e.\ unique and satisfy
   \begin{equation}\label{strongregul1}
      \sup_{s\in [t_1,t_2]}\mathbb{E}_{t_1,\infty}[|X_s|^2]+\sup_{s\in [t_1,t_2]}\mathbb{E}_{t_1,\infty}[|Y_s|^2]
      +\mathbb{E}_{t_1,\infty}\left[\int_{t_1}^{t_2}|Z_s|^2 d s\right]<\infty,
   \end{equation}
   for each constant initial value $X_{t_1}=x\in\mathbb{R}$. In addition they are required to be measurable as functions of $(x,s,\omega)$ and even weakly differentiable w.r.t. $x\in\mathbb{R}^n$ such that for every $s\in[t_1,t_2]$ the mappings $X_s$ and $Y_s$ are measurable functions of $(x,\omega)$ and even weakly differentiable w.r.t. $x$ such that
   \begin{align}\label{strongregul2}
     &\esssup_{x\in\mathbb{R}}\sup_{s\in [t_1,t_2]}\mathbb{E}_{t_1,\infty}\left[\left|\partial_x X_s\right|^2\right]<\infty, \nonumber\allowdisplaybreaks\\
     &\esssup_{x\in\mathbb{R}}\sup_{s\in [t_1,t_2]}\mathbb{E}_{t_1,\infty}\left[\left|\partial_x Y_s\right|^2\right]<\infty, \nonumber\\
     &\esssup_{x\in\mathbb{R}}\mathbb{E}_{t_1,\infty}\left[\int_{t_1}^{t_2}\left|\partial_x Z_s\right|^2\dx s\right]<\infty.
   \end{align}
   \item We say that a decoupling field on $[t,T]$ is \emph{strongly regular} on a subinterval $[t_1,t_2]\subseteq[t,T]$ if $u$ restricted to $[t_1,t_2]$ is a strongly regular decoupling field for $\fbsde(u(t_2,\cdot),(\mu,\sigma,f))$.
   \end{enumerate}
\end{definition}

Under suitable conditions a rich existence, uniqueness and regularity theory for decoupling fields can be developed. 
\smallskip\\
{\bf Assumption (SLC):} $(\xi,(\mu,\sigma,f))$ satisfies \emph{standard Lipschitz conditions} \textup{(SLC)} if
\begin{enumerate}
  \item $(\mu,\sigma,f)$ are Lipschitz continuous in $(x,y,z)$ with Lipschitz constant $L$,
  \item $\left\|\left(|\mu|+|f|+|\sigma|\right)(\cdot,\cdot,0,0,0)\right\|_{\infty}<\infty$,
  \item $\xi\colon \Omega\times\mathbb{R}\to \mathbb{R}$ is measurable such that $\|\xi(\cdot,0)\|_{\infty}<\infty$ and $L_{\xi,x}<L_{\sigma,z}^{-1}$.
\end{enumerate}

In order to have a notion of global existence we need the following definition:

\begin{definition} We define the maximal interval $I_{\max}\subseteq[0,T]$ of the problem given by  $(\xi,(\mu,\sigma,f))$ as the union of all
intervals $[t,T]\subseteq[0,T]$, such that there exists a weakly regular decoupling field $u$ on $[t,T]$.
\end{definition}

Note that the maximal interval might be open to the left. Also, let us remark that we define a decoupling field on such an interval as a mapping which is a decoupling field
on every compact subinterval containing $T$. Similarly we can define weakly and strongly regular decoupling fields as mappings which restricted to an arbitrary
compact subinterval containing $T$ are weakly (or strongly) regular decoupling fields in the sense of the definitions given above.

Finally, we have global existence and uniqueness on the maximal interval:

\begin{thm}[\cite{Fromm2015}, Theorem 5.1.11,  Lemma 5.1.12 and Corollary 2.5.5]\label{globalexist}
 Let $(\xi,(\mu,\sigma,f))$ satisfy SLC. Then there exists a
unique strongly regular decoupling field $u$ on $I_{\max}$.
Furthermore, either $I_{\max}=[0,T]$ or $I_{\max}=(t_{\min},T]$, where $0 \leq t_{\min} < T$. In the latter case we have
\ben\label{explosion}\lim_{t\downarrow t_{\min}} L_{u(t,\cdot),x}=L_{\sigma,z}^{-1}.\een
Moreover, for any $t\in I_{\max}$ and any initial condition $X_t=x\in\mathbb{R}$ there is a unique solution $(X,Y,Z)$ of the FBSDE on $[t,T]$ satisfying
    \begin{equation*}
      \sup_{s\in[t,T]}\mathbb{E}[|X_s|^2]+\sup_{s\in[t,T]}\mathbb{E}[|Y_s|^2]+\mathbb{E}\left[\int_t^T|Z_s|^2 d s\right]<\infty.
    \end{equation*}
\end{thm}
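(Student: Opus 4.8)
The plan is to reconstruct the argument of \cite{Fromm2015} in three stages: a local existence and regularity result near the terminal time $T$, a concatenation procedure that turns local decoupling fields into a global one, and an analysis of what can go wrong at the left endpoint of $I_{\max}$.

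\emph{Local existence.} First I would show that whenever a terminal condition satisfies (SLC), there is a strongly regular decoupling field on a short interval $[T-\eps,T]$. The natural approach is a fixed-point argument: on a closed ball of random fields $v\colon[T-\eps,T]\times\Omega\times\IR\to\IR$ that are Lipschitz in $x$ with constant bounded below $L_{\sigma,z}^{-1}$ and bounded at $x=0$, define $\Phi$ by substituting $Y_r=v(r,X_r)$ into the forward equation (which then becomes a decoupled SDE), solving for $X$, solving the resulting BSDE with terminal value $\xi(X_T)$ to obtain $(Y,Z)$, and setting $(\Phi v)(t_1,x):=Y_{t_1}$. Standard a priori estimates for SDEs and BSDEs under (SLC) show that $\Phi$ is a self-map and a contraction once $\eps$ is small — and, importantly, $\eps$ can be taken to depend only on $L$ and on the gap $L_{\sigma,z}^{-1}-L_{\xi,x}$. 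The fixed point is the decoupling field; differentiating the forward and backward equations in the initial value $x$ produces a linear FBSDE for $(\partial_x X,\partial_x Y,\partial_x Z)$ whose coefficients are controlled by $L$ and $L_{u,x}$, which yields the bounds \eqref{strongregul1}--\eqref{strongregul2} and hence strong regularity.

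\emph{Concatenation and the maximal interval.} Next I would observe that if $u$ is a weakly (hence, by the previous paragraph's machinery, strongly) regular decoupling field on $[t_1,T]$, then $u(t_1,\cdot)$ again satisfies (SLC), since weak regularity forces $L_{u(t_1,\cdot),x}<L_{\sigma,z}^{-1}$; applying the local step with this new terminal condition and gluing the two fields yields a weakly regular decoupling field on a strictly larger interval. Thus the family of intervals $[t,T]$ admitting a weakly regular decoupling field is closed under leftward extension, so $I_{\max}$ is a genuine interval, and the resulting field is strongly regular on every compact subinterval containing $T$. Uniqueness follows from uniqueness of the triple $(X,Y,Z)$ on short intervals (a byproduct of the contraction) together with a patching argument: two strongly regular decoupling fields must coincide near $T$, and the set on which they coincide is both open and closed in $I_{\max}$. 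The last assertion of the theorem — existence and uniqueness of $(X,Y,Z)$ with the stated $L^2$ bounds for any $t\in I_{\max}$ and $X_t=x$ — is then read off directly from strong regularity applied on $[t,T]$.

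\emph{The dichotomy and the main obstacle.} Finally, suppose $I_{\max}=(t_{\min},T]$ with $t_{\min}\geq0$; I claim \eqref{explosion} must hold. If not, there are $\delta>0$ and times $t\downarrow t_{\min}$ with $L_{u(t,\cdot),x}\leq L_{\sigma,z}^{-1}-\delta$. Applying the local existence step with terminal data $u(t,\cdot)$, whose gap is at least $\delta$, extends the decoupling field leftward by an amount $\eps_0>0$ depending only on $L$ and $\delta$, not on $t$; choosing $t$ with $t-t_{\min}<\eps_0$ produces a weakly regular decoupling field on an interval strictly containing $(t_{\min},T]$, contradicting maximality. The hard part is precisely the quantitative form of the local step used here: one needs the self-mapping and contraction properties of $\Phi$ — and hence $\eps_0$ — to degrade only through the gap $L_{\sigma,z}^{-1}-L_{u,x}$. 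This rests on tracking, in the differentiated forward equation, how the term $\sigma_z\,\partial_x Z$ couples back — via $\partial_x Y = u_x(s,X_s)\,\partial_x X_s$ — into $\partial_x Z$ itself, so that the relevant operator is a contraction exactly when $L_{u,x}\,L_{\sigma,z}<1$. Controlling this feedback loop is where the condition $L_{u,x}<L_{\sigma,z}^{-1}$ is indispensable and is the technical core of \cite{Fromm2015}.
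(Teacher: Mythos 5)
This theorem is imported verbatim from \cite{Fromm2015} (Theorem 5.1.11, Lemma 5.1.12 and Corollary 2.5.5); the paper itself gives no proof, so there is no in-paper argument to compare yours against. That said, your three-stage reconstruction --- local existence by a fixed-point argument whose interval length depends only on $L$ and the gap $L_{\sigma,z}^{-1}-L_{\xi,x}$, leftward concatenation to build $I_{\max}$, and the blow-up alternative \eqref{explosion} obtained by contradiction from the uniformity of the local step --- is exactly the architecture of the proof in the cited reference, and you correctly single out the quantitative dependence of $\eps$ on the gap as the crux. One imprecision worth flagging: when $\sigma$ depends on $z$, substituting $Y_r=v(r,X_r)$ alone does \emph{not} decouple the forward equation, since $Z_r$ still appears in $\sigma$; one must additionally resolve the implicit relation $Z_s=v_x(s,X_s)\,\sigma(s,X_s,Y_s,Z_s)$ for $Z_s$, which is possible precisely because $L_{v,x}L_{\sigma,z}<1$. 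This is the same feedback loop you describe in your final paragraph, but it already enters at the level of defining the map $\Phi$, not only in the differentiated system used for strong regularity.
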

Equality \eqref{explosion} allows to verify global existence, i.e. $I_{\max}=[0,T]$, via contradiction. We refer to this approach as the method of decoupling fields.

\section{Solving the continuous time problem}\label{solving conti}

In order to apply the results of the previous section we need to allow arbitrary real initial values $X_0=x\in\mathbb{R}$ in \eqref{fbsde}. In addition, we extend the domain of $h$ by setting it to zero whenever $x\leq 0$. Finally, we assume that $h$ is uniformly Lipschitz continuous in $x$. Under this assumption the problem \eqref{fbsde} satisfies (SLC).

We use the method of decoupling fields for proving that there exists a solution
of \eqref{fbsde} on $[0,T]$.
Since the parameters of \eqref{fbsde} satisfy the (SLC), there exists a maximal interval $I_{\max}$ with a weakly regular decoupling field $u$ (see Theorem \ref{globalexist}).

In the following fix $t_0 \in I_{\max}$. Let $(X,Y,Z)=(X^{t_0,x}, Y^{t_0,x},Z^{t_0,x})$ be the solution of \eqref{fbsde} on $[t_0, T]$ with initial value $x \in \R$
such that $Y_t=u(t,X_t)$ a.s. for all $(t,x) \in [t_0, T] \times \R$.

According to strong regularity $u$ is weakly differentiable w.r.t.\ the initial value $x\in\mathbb{R}$. In the following we denote by $u_x$ a version of the weak derivative of $u$ w.r.t.\ $x$ such that it coincides with the classical derivative at all points for which it exists and with $0$ everywhere else. Moreover, the processes $(X,Y,Z)$ are weakly differentiable w.r.t.\ $x$. We can formally differentiate the forward and the backward equation in \eqref{fbsde}. One can verify that one can interchange differentiation and integration and that a chain rule for weak derivatives applies (see Sections A.2 and A.3 in \cite{Fromm2015}). We thus obtain that for every version $(\partial_x X, \partial_x Y,\partial_x Z) = (\partial_x X^{t_0,x}, \partial_x Y^{t_0,x},\partial_x Z^{t_0,x})$ of the weak derivative, such that for every $s\in[t_0,T]$ $(\partial_x X_s, \partial_x Y_s)$ is a weak derivative of $(X_s,Y_s)$, we have for every $t\in[t_0,T]$:
\begin{align}\label{dyn xx}
\partial_x X_t = & 1 - \int_{t_0}^t h_y(s,Y_s)\partial_x Y_s \dx s + \int_{t_0}^t \sigma\cdot \partial_x X_s \dx W_s
\end{align} 
and
\begin{align}\label{dyn yx}
\partial_x Y_t =\partial_x X_T - \int_t^T \partial_x Z_s \dx W_s,
\end{align}
for $\mathbb{P}\otimes\lambda$ - almost all $(\omega,x)\in\Omega\times\IR$.

By redefining $(\partial_x X, \partial_x Y)$ as the right-hand-sides of \eqref{dyn xx} and \eqref{dyn yx} respectively, we obtain processes $(\partial_x X, \partial_x Y)$ that are continuous in time for all $(\omega,x)$ but remain weak derivatives of $X,Y$ w.r.t.\ $x$. From now on, we always assume that $\partial_x X$ and $\partial_x Y$ are continuous in time. We also assume that for fixed $t\in[t_0,T]$ the mappings $\partial_x X_t$ and $\partial_x Y_t$ are weak
derivatives of $X_t$ and $Y_t$ w.r.t. $x\in\mathbb{R}$. In particular $\partial_x X_{t_0}=1$ a.s. for almost all $x\in\IR$.

In order to obtain bounds on the weak derivative $u_x$, we study the process $V_t:= u_x(t,X_t)$, $t \in [t_0, T]$.  

Recall that $Y_t=u(t,X_t)$ a.s. for all $(t,x)\in[t_0,T]\times\mathbb{R}$. Therefore, for fixed $t\in[t_0,T]$, the weak derivatives of the two sides of the equation w.r.t. $x\in\mathbb{R}$
must coincide up to a $\mathbb{P}\otimes\lambda$ - null set. The chain rule for weak derivatives (see Corollary 3.2 in \cite{Ambrosio1990} or Lemma A.3.1.\ in \cite{Fromm2015}) implies, for any fixed $t\in[t_0,T]$, that we have for 
$\mathbb{P}\otimes\lambda$ - almost all $(\omega,x)$  
\begin{align}\label{chain} \partial_x Y_t\mathbf{1}_{\{\partial_x X_t>0\}}=u_x(t,X_t)\partial_x X_t\mathbf{1}_{\{\partial_x X_t>0\}}=V_t\partial_x X_t\mathbf{1}_{\{\partial_x X_t>0\}}. 
\end{align}

Now, choose a fixed $x\in\R$ such that $\partial_x X_{t_0}=1$ a.s., \eqref{chain}, \eqref{dyn xx}, \eqref{dyn yx} are satisfied for almost all $(\omega,t)\in[t_0,T]\times\Omega$ and, in addition, \eqref{chain} is satisfied for $t=t_0$, $\mathbb{P}$ - almost surely.
Note that, since $\partial_x X$, $\partial_x Y$ are continuous in time,
\eqref{dyn xx} and \eqref{dyn yx} in fact hold for all $t\in[t_0,T]$, $\mathbb{P}$ - almost surely.

Observe that $V_t$ is bounded since $u_x$ is bounded. We now turn to the dynamics of $V$.
\begin{lemma} \label{Vdynamic}
The process $(V_t)_{t \in [t_0, T]}$ has a time-continuous version which is an It\^o process. Moreover, there exists a square-integrable progressive process $\wh Z$ such that $(V,\wh Z)$ is the unique solution of the BSDE
\begin{align*}
V_t = 1 - \int_t^T \wh Z_s dW_s - \int_t^T  \left(V^2_s h_y(s,Y_s)-\sigma \wh Z_s\right)\dx s, \qquad t\in[t_0,T].
\end{align*}
\end{lemma}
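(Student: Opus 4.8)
The plan is to realise $V$ as the ratio $\partial_x Y/\partial_x X$ and then read off its dynamics from \eqref{dyn xx}--\eqref{dyn yx} via It\^o's formula; the one structural fact this hinges on is that $\partial_x X$ never touches zero, so I would establish that first. Set $\tau:=\inf\{t\in[t_0,T]:\partial_x X_t\le 0\}\wedge T$. Since $\partial_x X$ is continuous with $\partial_x X_{t_0}=1$, one has $\partial_x X_t>0$ on $[t_0,\tau)$ and $\partial_x X_\tau=0$ on $\{\tau<T\}$. On $[t_0,\tau)$ the chain-rule identity \eqref{chain} reads $\partial_x Y_t=V_t\,\partial_x X_t$, so inserting this into \eqref{dyn xx} shows that the stopped process $\xi_t:=\partial_x X_{t\wedge\tau}$ solves the \emph{linear} SDE
$$\xi_t=1-\int_{t_0}^t\mathbf{1}_{\{s\le\tau\}}\,h_y(s,Y_s)V_s\,\xi_s\dx s+\int_{t_0}^t\mathbf{1}_{\{s\le\tau\}}\,\sigma\,\xi_s\dx W_s,$$
whose coefficients are bounded, because $|h_y|\le L$ and $V=u_x(\cdot,X)$ is bounded by strong regularity. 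Its unique solution is a strictly positive exponential (Dol\'eans--Dade) process, so $\partial_x X_\tau=\xi_\tau>0$ a.s., contradicting $\partial_x X_\tau=0$ on $\{\tau<T\}$; hence $\tau=T$ a.s., i.e.\ $\partial_x X_t>0$ for all $t\in[t_0,T]$ a.s. Consequently $\wt V_t:=\partial_x Y_t/\partial_x X_t$ is a well-defined, time-continuous It\^o process which by \eqref{chain} agrees with $V_t$ for $\lambda$-almost all $t$, a.s.; this is the asserted continuous version of $V$, which I keep calling $V$.

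Next I would differentiate. By \eqref{dyn yx}, $\dx(\partial_x Y_t)=\partial_x Z_t\dx W_t$, and by \eqref{dyn xx} together with the positivity just shown, $\dx(\partial_x X_t)=-h_y(t,Y_t)V_t\,\partial_x X_t\dx t+\sigma\,\partial_x X_t\dx W_t$. Applying It\^o's formula to $(\partial_x X_t)^{-1}$ and then to the product $V_t=\partial_x Y_t\,(\partial_x X_t)^{-1}$, and abbreviating
$$\wh Z_t:=(\partial_x X_t)^{-1}\,\partial_x Z_t-\sigma V_t$$
(which is progressively measurable, with $\int_{t_0}^T\wh Z_s^2\dx s<\infty$ a.s.\ since $(\partial_x X)^{-1}$ is continuous, $\partial_x Z\in L^2$ and $V$ is bounded), a short computation — whose only non-routine step is substituting $(\partial_x X_t)^{-1}\partial_x Z_t=\wh Z_t+\sigma V_t$ back into the drift — gives $\dx V_t=\wh Z_t\dx W_t+\big(V_t^2 h_y(t,Y_t)-\sigma\wh Z_t\big)\dx t$. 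Since the backward equation in \eqref{fbsde} carries the terminal condition $Y_T=X_T$, i.e.\ $\xi=\mathrm{Id}$, one has $u(T,\cdot)=\mathrm{Id}$ and therefore $V_T=u_x(T,X_T)=1$; integrating the displayed dynamics backward from $T$ yields exactly the stated BSDE.

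It then remains to promote $\int_{t_0}^T\wh Z_s^2\dx s<\infty$ a.s.\ to $\mathbb{E}\big[\int_{t_0}^T\wh Z_s^2\dx s\big]<\infty$ and to argue uniqueness. For the integrability I would change measure so as to absorb the term $-\sigma\wh Z_s\dx s$: let $\mathbb{Q}$ have density $\exp\big(\sigma(W_T-W_{t_0})-\tfrac12\sigma^2(T-t_0)\big)$ with respect to $\mathbb{P}$ (Novikov is immediate as $\sigma$ is constant). Under $\mathbb{Q}$ the process $\hat W_t:=W_t-\sigma(t-t_0)$ is a Brownian motion and the BSDE becomes $V_t=1-\int_t^T\wh Z_s\dx\hat W_s-\int_t^T V_s^2 h_y(s,Y_s)\dx s$, whose driver is bounded (as $V$ is) and free of $\wh Z$. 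Hence the local martingale $M_\cdot:=\int_{t_0}^\cdot\wh Z_s\dx\hat W_s$ has uniformly bounded increments $M_T-M_t=1-V_t-\int_t^T V_s^2 h_y(s,Y_s)\dx s$ and is therefore uniformly bounded; by the Burkholder--Davis--Gundy inequality $\int_{t_0}^T\wh Z_s^2\dx s=\langle M\rangle_T$ has moments of every order under $\mathbb{Q}$, and Cauchy--Schwarz together with the fact that $\tfrac{\mathrm d\mathbb{P}}{\mathrm d\mathbb{Q}}$ has moments of every order gives the bound under $\mathbb{P}$. Uniqueness within the class of bounded $V$ (with square-integrable $\wh Z$) is routine: writing $V_s^2-\wt V_s^2=(V_s+\wt V_s)(V_s-\wt V_s)$, the difference of two such solutions satisfies a linear BSDE with bounded coefficients and hence vanishes by the standard a priori $L^2$-estimate.

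The genuinely delicate point is the positivity of $\partial_x X$: without it the ratio $\partial_x Y/\partial_x X$ — and thus the whole reduction — is meaningless, and one can only reach it by feeding \eqref{chain} into \eqref{dyn xx} and exploiting boundedness of the decoupling field's gradient. The quadratic term $V^2 h_y$ in the resulting driver is the secondary obstacle, since it places the BSDE outside standard Lipschitz theory and is precisely what forces both the restriction to bounded solutions and the measure-change argument for the $L^2$-bound on $\wh Z$.
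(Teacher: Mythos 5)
Your proposal is correct and follows essentially the same route as the paper: stop before $\partial_x X$ reaches zero, use the bounded drift $-h_y(t,Y_t)V_t$ to represent $\partial_x X$ as a strictly positive stochastic exponential (hence never hitting zero), and read off the dynamics of $V=\partial_x Y/\partial_x X$ — your It\^o computation on $(\partial_x X)^{-1}$ is equivalent to the paper's coefficient matching via the product rule applied to $V_t\,\partial_x X_t=\partial_x Y_t$, and yields the same $\wh Z_t=(\partial_x X_t)^{-1}\partial_x Z_t-\sigma V_t$. Your additional Girsanov/BDG argument for $\mathbb{E}\int\wh Z^2\,\dx s<\infty$ and the linearization argument for uniqueness are not spelled out in the paper's proof but are correct and welcome supplements.
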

\begin{proof}
Let $\tau_n = T \wedge \inf\{ t \ge t_0: \partial_x X_t \le \frac{1}{n} \}$. On $[t_0, \tau_n]$ we have $V_t = \partial_x Y_t \frac{1}{\partial_x X_t}$, a.e. Hence $V$ has a version which is an It\^o process on $[t_0, \tau_n]$. We denote the It\^o process decomposition by 
\begin{align*}
V_t = u_x(t_0, x) + \int_{t_0}^t \wh Z_s \dx W_s + \int_{t_0}^t \kappa_s \dx t, \quad t \in [t_0, \tau_n]. 
\end{align*}
The product formula yields, on $[t_0, \tau_n]$, 
\begin{align*}
\dx(V_t \partial_x X_t) = & V_t \left(-h_y(t,Y_t)\partial_x Y_t\dx t+\sigma\cdot \partial_x X_t \dx W_t\right)\\
+ &  \partial_x X_t \left(\kappa_t \dx t+\wh Z_t \dx W_t\right)+\sigma\cdot \partial_x X_t \wh Z_t \dx t.
\end{align*}
Observe that $V_t \partial_x X_t= \partial_x Y_t $. The drift and diffusion coefficients coincide with the coefficients in \eqref{dyn yx}. This implies:
$$ \partial_x Z_t = V_t \sigma \partial_x X_t+\partial_x X_t \wh Z_t $$
and
$$ 0=-V_th_y(t,Y_t)\partial_x Y_t+\partial_x X_t \kappa_t+\sigma\cdot \partial_x X_t \wh Z_t. $$
Using straightforward transformations we obtain
$$ \wh Z_t = \frac{\partial_x Z_t}{\partial_x X_t} - \sigma V_t $$
and 
$$ \kappa_t=V_th_y(t,Y_t)V_t-\sigma \wh Z_t, $$
again on the stochastic interval $[t_0, \tau_n]$. It remains to show that $\tau:=\lim_{n\rightarrow\infty}\tau_n=T$ a.s. To this end note that, according to \eqref{dyn xx}, $\partial_x X_t$ satisfies, on $[t_0,\tau)$, the linear SDE
\begin{align*}
\dx\partial_x X_t = \alpha_t\partial_x X_t \dx t + \sigma\partial_x X_t \dx W_t,
\end{align*} 
where $\alpha_t=-h_y(t,Y_t)V_t$ is uniformly bounded. Consequently, 
$$ \partial_x X_{t\wedge \tau_n}=\exp\left(\int_{t_0}^{t\wedge\tau_n}\left(\alpha_s-\frac{1}{2}\sigma^2\right)\dx s+\int_{t_0}^{t\wedge\tau_n}\sigma\dx W_s\right).$$
Now if $\left(\lim_{n\rightarrow\infty}\partial_x X_{\tau_n}\right)(\omega)=0$ for some $\omega$, 
then $\lim_{n\rightarrow\infty}|W_{t\wedge\tau_n}(\omega)|=\infty$ would hold for the same $\omega$. This, however, is false for almost all $\omega$. In other words, the continuous process $\partial_x X$ does not reach $0$ with probability $1$ and, therefore, 
$\lim_{n\rightarrow\infty}\tau_n=T$ a.s.

In particular $V_t = \partial_x Y_t \frac{1}{\partial_x X_t}$ a.e. and $V$ has a time-continuous version.
\end{proof}

In the following we assume that $V$ refers to the time-continuous version of Lemma \ref{Vdynamic}. Note that there exists a probability measure $Q\sim\mathbb{P}$ such that
$$ \frac{\dx Q}{\dx \mathbb{P}}=\exp\left(\int_{t_0}^T\sigma\dx W_t-\frac{1}{2}\int_{t_0}^T\sigma^2\dx t\right). $$
By Girsanov's theorem $W^Q_t:=W_t-\int_{t_0}^t\sigma\dx s$, $t\in[t_0,T]$, is a Brownian motion w.r.t.\ $Q$. Observe that $V$ satisfies
\begin{align*}
V_t = 1 - \int_t^T \wh Z_s dW^Q_s - \int_t^T  V^2_s h_y(s,Y_s) \dx s, \qquad t\in[t_0,T].
\end{align*}
Working under the new probability measure we now prove:

\begin{lemma}\label{v bdd}
For all $t\in[t_0,T]$ we have a.s. $q\leq V_t \leq 1, $
where $$q:=\exp\left(-T\|h_y\|_\infty\right)\in(0,1).$$
\end{lemma}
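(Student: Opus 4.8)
The plan is to exploit the BSDE for $V$ under the measure $Q$,
\[
V_t = 1 - \int_t^T \wh Z_s \dx W^Q_s - \int_t^T V_s^2\, h_y(s,Y_s)\dx s,
\]
and to linearise the quadratic term by writing $V_s^2 h_y(s,Y_s) = \gamma_s V_s$ with $\gamma_s := V_s\, h_y(s,Y_s)$. Since $V$ is time-continuous and bounded (by Lemma \ref{Vdynamic} and the weak regularity of $u$, which gives boundedness of $u_x$) and $h_y$ is bounded by the Lipschitz assumption on $h$, the process $\gamma$ is a bounded progressive process. Thus $V$ solves a \emph{linear} BSDE with zero generator driver apart from the $\gamma_s V_s$ term and terminal value $1$, and the standard representation for linear BSDEs yields, $Q$-a.s.,
\[
V_t = \mathbb{E}^Q\!\left[\exp\!\left(-\int_t^T \gamma_s \dx s\right)\,\Big|\,\mathcal{F}_t\right], \qquad t\in[t_0,T].
\]

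From this representation both bounds follow by estimating the exponent. First I would observe that $V_s \le 1$ for all $s$: indeed, if $V_s \ge 0$ on $[t,T]$ then $\gamma_s \ge 0$, hence the exponential is $\le 1$ and the conditional expectation is $\le 1$; conversely $V_s \ge 0$ follows from the same representation because $\gamma$ is bounded so the exponent is finite and the integrand is a.s. positive. To make this rigorous without circularity, the cleanest route is: the integrand $\exp(-\int_t^T\gamma_s\,\dx s)$ is a.s. strictly positive (as $\gamma$ is bounded, say $|\gamma_s|\le \|h_y\|_\infty \cdot \|u_x\|_\infty =: C$), so $V_t > 0$ a.s.; having established $V \ge 0$ we get $0 \le \gamma_s \le \|h_y\|_\infty$, whence $\exp(-\int_t^T\gamma_s\,\dx s) \in [\exp(-(T-t)\|h_y\|_\infty),\,1] \subseteq [q,1]$, and taking conditional expectations preserves these bounds. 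This gives $q \le V_t \le 1$ a.s. for all $t\in[t_0,T]$, with $q = \exp(-T\|h_y\|_\infty) \in (0,1)$.

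The main obstacle is the bootstrapping needed to get the \emph{sharp} lower bound: a priori one only knows $|\gamma_s| \le C$ with $C$ depending on the (possibly large) bound on $u_x$, which would give only $V_t \ge \exp(-TC)$ — not good enough. The trick is that once nonnegativity of $V$ is in hand, the quadratic structure $\gamma_s = V_s h_y(s,Y_s)$ together with $V_s \le 1$ forces $0 \le \gamma_s \le \|h_y\|_\infty$, which is the bound that produces $q$. So the logical order matters: (1) bounded $\gamma$ $\Rightarrow$ $V>0$; (2) $V>0$ and $h_y\ge 0$ $\Rightarrow$ $\gamma\ge 0$ $\Rightarrow$ $V\le 1$; (3) $0\le V\le 1$ and $0\le h_y$ $\Rightarrow$ $0\le\gamma\le\|h_y\|_\infty$ $\Rightarrow$ $V\ge q$. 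One should also note $h_y \ge 0$ a.e., which holds because $h(t,\cdot)$ is monotonically increasing. A minor technical point is that $h_y$ denotes a weak derivative, so all inequalities involving it hold $\lambda$-a.e. in the spatial variable and hence $\mathbb{P}\otimes\lambda$-a.e. along the paths; this does not affect the integrals $\int_t^T \gamma_s\,\dx s$.
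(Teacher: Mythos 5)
Your proof is correct, but it takes a genuinely different route from the paper's. The paper does not linearise: it introduces the cut-off $c(v)=((v\vee 0)\wedge 1)$, considers the auxiliary Lipschitz BSDE with driver $c(\wc V_s)^2h_y(s,Y_s)$, obtains $\wc V\le 1$ and $\wc V\ge q$ from two applications of the comparison theorem (against the BSDE with data $(1,0)$ for the upper bound, and against the one with generator $-\|h_y\|_\infty c$ for the lower bound), and only then identifies $\wc V$ with $V$ by uniqueness of bounded solutions to locally Lipschitz BSDEs. You instead work directly with $V$'s own equation, writing the quadratic driver as $\gamma_s V_s$ with $\gamma_s:=V_sh_y(s,Y_s)$ bounded, and use the exponential representation $V_t=\mathbb{E}^Q\bigl[\exp\bigl(-\int_t^T\gamma_s\dx s\bigr)\,\big|\,\mathcal{F}_t\bigr]$; this is legitimate as a verification (no circularity), since $\exp\bigl(-\int_{t_0}^{\cdot}\gamma_s\dx s\bigr)V$ is a bounded local martingale, hence a true martingale. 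Your three-step bootstrap --- positivity of $V$ from boundedness of $\gamma$, then $V\le 1$ from $\gamma\ge 0$, then $V\ge q$ from $0\le\gamma\le\|h_y\|_\infty$ --- is logically sound and correctly ordered; the only technicalities to make explicit are that the representation holds for each fixed $t$ a.s.\ and should be upgraded to ``for all $t$ a.s.''\ via the time-continuity of $V$ from Lemma \ref{Vdynamic} before concluding $\gamma\ge 0$ $\dx t\otimes\dx Q$-a.e., and the $\lambda$-a.e.\ caveat on $h_y\ge 0$, which you note. As for what each approach buys: yours avoids the auxiliary equation and the uniqueness identification entirely and even yields the slightly sharper pointwise bound $V_t\ge\exp(-(T-t)\|h_y\|_\infty)$; the paper's comparison argument is more robust, since it does not need the driver to be exactly linearisable in $V$ and would survive, say, an additional additive source term in the generator.
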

\begin{proof}
Define the cut-off function $c(v) := ((v \vee 0) \wedge 1)$ and consider the solution $(\wc V, \wc Z)$ to the Lipschitz BSDE
\begin{align*}
\wc V_t = 1 - \int_t^T \wc Z_s dW^Q_s - \int_t^T  c(\wc V_s)^2 h_y(s,Y_s) \dx s, \qquad t\in[t_0,T].
\end{align*}
It is not a priori clear that $\wc V$ and $V$ are the same. The comparison theorem, applied to $(\wc V, \wc Z)$ and the BSDE with parameters $(1,0)$, implies $\wc V_t \le 1$. A comparison with the BSDE with parameters $\left(1, -\|h_y\|_\infty c\right)$, where $-\|h_y\|_\infty c\leq -c^2 h_y(s,Y_s)$ refers to the generator, yields $\wc V_t \ge q$. Having established that $\wc V$ assumes values in $[0,1]$ we have $c(\wc V_s)=\wc V_s$ such that the bounded processes $\wc V$ and $V$ satisfy the same local-Lipschitz BSDE and are, therefore, the same. As a consequence, we obtain $q\leq V_t \leq 1$.
\end{proof}
Note that we have chosen a version of $V$ such that $V_t=\frac{\partial_x Y_t}{\partial_x X_t}$ for all $t\in[t_0,T]$. Moreover, for $t=t_0$, we have
$V_{t_0}=\frac{\partial_x Y_{t_0}}{1}=u_x(t_0,x)$, a.s.

Since $x$ was chosen arbitrarily outside a $\lambda$ - null set, we have that the $u_x(t_0, \cdot)$ is essentially bounded by  $1$. Since the bound does not depend on $t_0$, by Theorem \ref{globalexist} it must hold that $I_{\max} = [0,T]$, which concludes the proof of well-posedness of the FBSDE \eqref{fbsde}. Moreover, the following holds true:
\begin{propo}\label{mainprop} There exists a unique weakly regular decoupling field $u$ on $[0,T]$ to the problem given by \eqref{fbsde}. In addition to being strongly regular $u$ has the following properties:
\begin{itemize}
\item $u$ is deterministic, i.e. it is a function of $(s,x)$ only.
\item $u(s,x)=x$ whenever $x\leq 0$.
\item The weak derivative $u_x$ takes values in $[q,1]$ only. In particular, $u$ is monotonically increasing in $x$.
\end{itemize}
Finally, for any $X_0=x\in[0,\infty)$ there exists a unique solution $(X,Y,Z)$ of the FBSDE \eqref{fbsde} on $[0,T]$ satisfying
    \begin{equation*}
      \sup_{s\in[0,T]}\mathbb{E}[|X_s|^2]+\sup_{s\in[0,T]}\mathbb{E}[|Y_s|^2]+\mathbb{E}\left[\int_0^T|Z_s|^2 d s\right]<\infty.
    \end{equation*}
This unique solution satisfies $0\leq Y_s=u(s,X_s)\leq X_s$ for all $s\in[0,T]$.
\end{propo}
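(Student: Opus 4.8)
The plan is to read the proposition off from the decoupling-field machinery already established, adding only a determinism argument and a short sign analysis. First I would invoke Theorem \ref{globalexist}: since it was just shown that $I_{\max}=[0,T]$, that theorem supplies a unique strongly regular --- hence weakly regular --- decoupling field $u$ on $[0,T]$, its uniqueness among weakly regular decoupling fields, and, for every $X_0=x\in\mathbb{R}$ (in particular $x\ge0$), a unique solution $(X,Y,Z)$ of \eqref{fbsde} on $[0,T]$ satisfying the stated square-integrability; the decoupling condition gives $Y_s=u(s,X_s)$. Thus the only new content is (i) determinism of $u$, (ii) $u(s,x)=x$ for $x\le0$, (iii) $u_x\in[q,1]$, and (iv) $0\le Y_s\le X_s$.

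For (i) I would note that $\mu(r,x,y,z)=-h(r,y)$, $\sigma(r,x,y,z)=\sigma x$, $f\equiv0$ and $\xi=\mathrm{Id}$ carry no $\omega$-dependence, so the solution of \eqref{fbsde} on $[t_0,T]$ started from a constant $X_{t_0}=x$ is measurable with respect to $\sigma(W_r-W_{t_0}\colon r\in[t_0,T])$ --- this is visible from the Picard iteration producing it (cf.\ \cite{Fromm2015}) --- and hence $Y_{t_0}=u(t_0,x)$ is measurable with respect to a trivial $\sigma$-algebra, i.e.\ deterministic; as $t_0$ and $x$ are arbitrary, $u$ is deterministic. For (ii) I would exhibit, for $X_{t_0}=x\le0$, the explicit solution $(X,X,\sigma X)$ with $X_s=x\exp\!\big(\sigma(W_s-W_{t_0})-\tfrac12\sigma^2(s-t_0)\big)$: since $X$ stays in $(-\infty,0]$ the extended payoff rate vanishes along it, so the forward equation reduces to a driftless geometric Brownian motion, which also solves the backward equation with $Y_T=X_T=\xi(X_T)$; by uniqueness $u(s,X_s)=X_s$ a.s., and letting $x$ range over a countable dense subset of $(-\infty,0)$ and using the uniform Lipschitz continuity of $u(s,\cdot,\cdot)$ in its last argument yields $u(s,\cdot,x')=x'$ a.s.\ for every $x'<0$, and $u(s,\cdot,0)=0$ by continuity.

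For (iii) I would combine Lemma \ref{v bdd}, which gives $q\le V_t\le1$, with the identity $V_{t_0}=u_x(t_0,x)$ (holding a.s.\ for a.e.\ $x$, and for the arbitrary $t_0\in[0,T]$), so that $u_x$ takes values in $[q,1]$; together with $u(s,0)=0$ from (ii) this forces $qx\le u(s,x)\le x$ for $x\ge0$ and monotonicity of $u(s,\cdot)$. For (iv) I would substitute $Y_r=u(r,X_r)$ into the forward equation, turning it into the closed SDE $X_s=x-\int_{t_0}^s h(r,u(r,X_r))\dx r+\int_{t_0}^s\sigma X_r\dx W_r$, whose coefficients are Lipschitz and whose drift vanishes on $(-\infty,0]$ by (ii); since the null process solves this SDE from initial value $0$, the one-dimensional comparison theorem gives $X_s\ge0$ whenever $X_0=x\ge0$, and then $0\le Y_s=u(s,X_s)\le X_s$ follows from $qx\le u(s,x)\le x$.

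The individual steps are light, so the main care is needed in (iv): recognizing that substituting the decoupling field yields a genuinely closed, Lipschitz SDE for $X$ and that the null process solves it from the origin is what makes the economically meaningful sandwich $0\le Y_s\le X_s$ go through, so I would treat this comparison argument as the crux; the determinism argument in (i) is the only other place where one must reason rather than merely quote Theorem \ref{globalexist}.
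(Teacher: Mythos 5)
Your proof is correct and follows essentially the same route as the paper: quote Theorem \ref{globalexist} for existence, uniqueness and strong regularity once $I_{\max}=[0,T]$ is known, derive $u_x\in[q,1]$ from Lemma \ref{v bdd} via the identity $V_{t_0}=u_x(t_0,x)$, exhibit the explicit geometric-Brownian-motion solution for $x\le 0$, and deduce $0\le u(s,x)\le x$ for $x\ge 0$ from $u(s,0)=0$ and $u_x\le 1$. The one place where you go beyond the paper is your step (iv): the paper passes directly from the inequality $0\le u(s,x)\le x$ for $x\ge 0$ to the conclusion $0\le Y_s\le X_s$ without remarking that this requires $X_s\ge 0$ a.s.; your comparison-theorem argument for the decoupled forward SDE (whose drift $-h(r,u(r,\cdot))$ vanishes on $(-\infty,0]$, so that the null process is a solution from initial value $0$) supplies exactly this implicit step and is a worthwhile addition. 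For determinism the paper simply cites Lemma 2.5.13 of \cite{Fromm2015}, whereas you sketch the underlying independence argument directly; the content is the same.
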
 
\begin{proof}
The existence of $(X,Y,Z)$ follows from Theorem \ref{globalexist}. The fact that  $u_x$ assumes values in $[q,1]$ is a direct consequence of Lemma \ref{v bdd}, which holds for any $t_0\in[0,T]$. Also, $u$ is deterministic because $h$ has this property (see e.g.\ Lemma 2.5.13.\ of \cite{Fromm2015}). 

We next show $u(t_0,x)=x$ for $x\leq 0$ and $t_0\in[0,T]$: In this case the solution to \eqref{fbsde} can be provided explicitly: 
$$ X_t=x\exp\left(\int_{t_0}^t\sigma\dx W_s-\frac{1}{2}\int_{t_0}^t\sigma^2\dx s\right), $$
$$ Y_t=\mathbb{E}\left[X_T|\mathcal{F}_t\right]=X_t, \qquad t\in[t_0,T].$$
It is straightforward to verify that these processes, being non-positive martingales, in fact solve the FBSDE, since $h(t,Y_t)=0$. As a result $u(t_0,x)=u(t_0,X_{t_0})=Y_{t_0}=X_{t_0}=x$.

It remains to show $0\leq Y_s=u(s,X_s)\leq X_s$ for $x\geq 0$: Since $u(s,0)=0$ and since $u$ is increasing in $x$ with the derivative being at most $1$, we have $0\leq u(s,x)\leq x$, for all $x\geq 0$.
\end{proof}

\subsection*{An illustrating example}

In general problem \eqref{fbsde} cannot be solved in closed form and the use of a numerical scheme for coupled FBSDE is necessary to calculate or approximate the decoupling field $u:[0,T]\times\mathbb{R}\rightarrow\mathbb{R}$. Once the decoupling field is obtained, the forward equation, i.e.\ the first equation in \eqref{fbsde}, is straightforward to simulate and $Y$ is obtained from $Y_s=u(s,X_s)$.

However, in special cases the solution can be obtained explicitly: Assume that $h$ is time-homogeneous and linear (for non-negative net equities), i.e.\ $h(s,y)=\gamma y^+$ with some constant $\gamma>0$. Let $t_0\in[0,T]$ and $x=X_{t_0}\in[0,\infty)$. According to Proposition \ref{mainprop} (applied to the interval $[t_0,T]$) the process $Y$ is non-negative, so Lemma \ref{Vdynamic} yields 
$$ V_t = 1 - \int_t^T \wh Z_s dW^Q_s - \int_t^T  V^2_s \gamma\dx s,\qquad t\in[t_0,T]. $$
Since $1$ and $\gamma$ are deterministic it is natural to conjecture that $V$ depends on time only, such that $\wh Z$ vanishes. This would mean that $V$ solves a quadratic ODE the solution for which is straightforward to obtain. In deed, it is straightforward to check that $t\mapsto\frac{1}{1+\gamma (T-t)}$ is a bounded solution to the BSDE satisfied by $V$ in Lemma \ref{Vdynamic}, which means that $\wh Z=0$ and
$$V_t=\frac{1}{1+\gamma (T-t)}, \qquad t\in[t_0,T]. $$
In particular, $u_x(t_0,x)=V_{t_0}=\frac{1}{1+\gamma (T-t_0)}$ for all $x\in[0,\infty)$ and all $t_0\in[0,T]$. This implies 
$$u(t,x)=\frac{x}{1+\gamma (T-t)},\qquad (t,x)\in[0,T]\times[0,\infty)$$
 using Proposition \ref{mainprop}. In particular $Y_0=u(0,X_0)=X_0\left(1+\gamma T\right)^{-1}$.

Next we observe \eqref{fbsde} and conclude that for $X_0\in[0,\infty)$ the corresponding process $X$ on $[0,T]$ is the unique solution to the linear SDE
$$ X_s=X_0-\int_{0}^s \frac{\gamma}{1+\gamma (T-r)} X_r\dx r+\int_{0}^s \sigma X_r\dx W_r. $$
This allows to simulate the gross equity $X$ under the assumption that there is no intrinsic growth or shrinkage. The interested reader is encouraged to apply the It\^o formula to $u$ to verify  that $t\mapsto Y_t=u(t,X_t)$ is then indeed a martingale.\\If an investor wishes to work under the assumption of positive growth, given by an exponential growth rate parameter $\mu>0$, the SDE to simulate would be
$$ X_s=X_0+\int_{0}^s \left(\mu-\frac{\gamma}{1+\gamma (T-r)}\right)X_r\dx r+\int_{0}^s \sigma X_r\dx W_r. $$
Regardless of $\mu\in\IR$ the function $s\mapsto \mathbb{E}[X_s]$ satisfies a linear ODE. Also, the respective value of EbDOs maturing over a time interval $[a,b]\subseteq[0,T]$ is calculated via
$$ \int_a^b\mathbb{E}[h(s,Y_s)]\dx s=\int_a^b\mathbb{E}[h(s,u(s,X_s))]\dx s=
\int_a^b\frac{\gamma\mathbb{E}[X_s]}{1+\gamma (T-s)}\dx s. $$

Note that in the above example the volatility parameter $\sigma$ influences the volatility of the processes $X,Y$ but not their expected values. Neither is the expected payoff (resp.\ value of EbDOs) influenced. This is characteristic for the linear case only.
\bibliography{quellen}{}

\begin{thebibliography}{10}

\bibitem{Ambrosio1990}
L.~{Ambrosio} and G.~{Dal Maso}.
\newblock {A general chain rule for distributional derivatives}.
\newblock {\em {Proc. Am. Math. Soc.}}, 108(3):691--702, 1990.

\bibitem{doi:10.1137/17M1152401}
S.~Ankirchner and A.~Fromm.
\newblock Optimal control of diffusion coefficients via decoupling fields.
\newblock {\em SIAM Journal on Control and Optimization}, 56(4):2959--2976,
  2018.

\bibitem{Delarue2002}
F.~Delarue.
\newblock {On the existence and uniqueness of solutions to {FBSDE}s in a
  non-degenerate case}.
\newblock {\em Stochastic Process. Appl.}, 99(2):209--286, 2002.

\bibitem{Fromm2015}
A.~Fromm.
\newblock {\em {Theory and applications of decoupling fields for
  forward-backward stochastic differential equations}}.
\newblock PhD thesis, Humboldt-Universit\"{a}t zu Berlin, 2015.

\bibitem{FrommImkeller2017}
A.~Fromm and P.~Imkeller.
\newblock {Utility maximization via decoupling fields}.
\newblock {\em Preprint arXiv:1711.06033}, 2017.

\bibitem{Proemel2015}
A.~Fromm, P.~Imkeller, and D.~Prömel.
\newblock {An FBSDE approach to the Skorokhod embedding problem for Gaussian
  processes with non-linear drift}.
\newblock {\em Electron. J. Probab.}, 20:38 pp., 2015.

\bibitem{Ma1994}
J.~{Ma}, P.~{Protter}, and J.~{Yong}.
\newblock {Solving forward-backward stochastic differential equations
  explicitly -- a four step scheme}.
\newblock {\em {Probab. Theory Relat. Fields}}, 98(3):339--359, 1994.

\bibitem{ma:wu:zhang:15}
J.~Ma, Z.~Wu, D.~Zhang, and J.~Zhang.
\newblock On well-posedness of forward-backward {SDE}s---a unified approach.
\newblock {\em Ann. Appl. Probab.}, 25(4):2168--2214, 2015.

\bibitem{ma:yin:zhan:12}
J.~Ma, H.~Yin, and J.~Zhang.
\newblock On non-{M}arkovian forward-backward {SDE}s and backward stochastic
  {PDE}s.
\newblock {\em Stochastic Process. Appl.}, 122(12):3980--4004, 2012.

\bibitem{ma:yon:99}
J.~Ma and J.~Yong.
\newblock {\em Forward-backward stochastic differential equations and their
  applications}, volume 1702 of {\em Lecture Notes in Mathematics}.
\newblock Springer-Verlag, Berlin, 1999.

\bibitem{Pardoux1999}
E.~{Pardoux} and S.~{Tang}.
\newblock {Forward-backward stochastic differential equations and quasilinear
  parabolic PDEs}.
\newblock {\em {Probab. Theory Relat. Fields}}, 114(2):123--150, 1999.

\bibitem{peng:wu:99}
S.~Peng and Z.~Wu.
\newblock Fully coupled forward-backward stochastic differential equations and
  applications to optimal control.
\newblock {\em SIAM J. Control Optim.}, 37(3):825--843, 1999.

\end{thebibliography}
\bibliographystyle{abbrv}

\end{document}